\title[Interatomic Force Constants in \tiny{r}\scriptsize{HF}]{Decay of Interatomic Force Constants\\in the Reduced Hartree--Fock model}
\author{\'Eric Canc\`es}
\author{Antoine Levitt}
\author{Jack Thomas}
\date{\today}
\newcommand{\theaddresses}{\noindent\texttt{eric.cances@enpc.fr} \\ ---
CERMICS, Ecole des Ponts - Institut Polytechnique de Paris and Inria, 6-8 avenue Blaise Pascal, Cit\'e Descartes, 77455 Marne-la-Vall\'ee, France.
    \\
    \texttt{antoine.levitt@universite-paris-saclay.fr}, \texttt{jack.thomas@universite-paris-saclay.fr} \\ --- Universit\'e Paris-Saclay, CNRS, Laboratoire de math\'ematiques d'Orsay, 91405, Orsay, France.
}
 \DeclareFontFamily{U}{mathx}{}
 \DeclareFontShape{U}{mathx}{m}{n}{<-> mathx10}{}
 \DeclareSymbolFont{mathx}{U}{mathx}{m}{n}
 \DeclareMathAccent{\widehat}{0}{mathx}{"70}
 \DeclareMathAccent{\widecheck}{0}{mathx}{"71}
\newcommand{\dos}{{\rm DOS}}
\renewcommand{\leq}{\leqslant}
\renewcommand{\geq}{\geqslant}
\renewcommand{\above}[2]{\genfrac{}{}{0pt}{}{#1}{#2}}
\DeclareMathOperator*{\supp}{supp}
\DeclareMathOperator*{\re}{Re}
\DeclareMathOperator*{\Tr}{Tr}
\DeclareMathOperator*{\tr}{Tr}
\DeclareMathOperator*{\den}{den}
\DeclareMathOperator*{\pv}{pv}
\renewcommand{\i}{i}
\newcommand{\D}{\mathrm{d}}
\renewcommand{\t}{\mathsf{T}} 
\newcommand{\ep}{\varepsilon}
\newcommand{\R}{\mathbb{R}}
\newcommand{\N}{\mathbb{N}}
\renewcommand{\L}{\mathbb{L}}
\newtheorem*{fact*}{Fact}
\newtheorem{theorem}{Theorem}[section]
\newtheorem*{theorem*}{Theorem}
\newtheorem{lemma}[theorem]{Lemma}
\theoremstyle{remark}
\newtheorem{rem}{Remark}
\newtheorem{noname}{}
\newtheorem{as}{Assumption}
\newenvironment{remark}{\begin{rem}}{\hfill $\star$ \end{rem}}
\numberwithin{equation}{section}
\numberwithin{equation}{section}
\newcounter{listcounter}
\newcounter{listcounter2}
\newwrite\tempfile
\begin{document}
\newcommand{\ec}[2][1]{\comment[#1]{EC}{red}{\textup{#2}}}

\newcommand{\al}[2][1]{\comment[#1]{AL}{orange}{\textup{#2}}}

\newcommand{\jt}[2][1]{\comment[#1]{JT}{purple}{\textup{#2}\hyperref[sec:comments]{\,\,}}}
\newcommand{\jtm}[2][0]{\marginpar{\tiny{\comment[#1]{JT}{purple}{#2}}}}

\setcounter{page}{1}

\begin{abstract}
  We study the decay of the interatomic force constants (equivalently,
  the smoothness properties of the dynamical matrix) in perfect
  crystals both at finite electronic temperature, and for insulators
  at zero temperature, within the reduced Hartree--Fock approximation
  (also called Random Phase Approximation). At finite temperature the
  electrons are mobile, leading to exponential decay of the force
  constants. In insulators, there is incomplete screening, leading to
  an algebraic decay of dipole-dipole interaction type.
\end{abstract}

\begin{textblock*}{20cm}(-2cm,-2cm) 
\end{textblock*}

\maketitle

\let\thefootnote\relax\footnote{
    \theaddresses\\
    \textit{Key words and phrases:}~Interatomic force constants; dynamical matrix; reduced Hartree--Fock model; phonons; Born effective charges; dielectric operator.
 }

\setcounter{page}{1}
\thispagestyle{empty}

\vspace{-1cm}

\section{Introduction}

Locality is a key concept in many numerical electronic structure methods for the simulation of molecules and materials. 
The most well-known example is that of ``nearsightedness'', introduced by Kohn \cite{
  Kohn1996
}. 
The mathematical manifestation of this property is rapid off-diagonal decay of the density matrix for insulators or at finite temperature (see for example, \cite{
    Goedecker1998,
    Benzi2013,
    ProdanKohn05,
    Prodan2005
}). 
This property has been exploited to develop linear-scaling Kohn--Sham type models 
\cite{
    Yang1991,
    Goedecker99, 
    BowlerMiyazaki2012, 
    Niklasson2011, 
    Rubensson2011
}. 
Nearsightedness of the density matrix is however \textit{not} sufficient to justify the use of interatomic potentials or multi-scale schemes such as QM/MM methods \cite{
    ChenOrtner16, 
    ChenOrtner17,
    ChenOrtnerWang22
}. Indeed, as noted by 
\cite{
    CsanyiAlbaretMorasPayneDeVita05,
    ChenOrtner17
}, a key requirement in order to implement these multi-scale algorithms is the \textit{locality of forces}; derivatives of the force on atom $I$ with respect to atom $J$ decays rapidly as the distance between atoms $I$ and $J$ increases. 
Rigorous results in this direction include work on the Thomas--Fermi--von Weizs\"acker model \cite{NazarOrtner17} and in tight binding models of varying complexity \cite{
    ChenOrtner16, 
    ChenOrtnerThomas2019:locality, Thomas2020:scTB
}. In fact, these papers demonstrate \textit{energy locality}, a stronger locality property that states that the energy may be decomposed into the sum of site energies depending only on a small atomic neighbourhood of the central atom. This leads to theoretical justification for interatomic potentials \cite{ThomasChenOrtner2022:body-order}, as well as the study of thermodynamic limit models for crystalline defects \cite{
    EhrlacherOrtnerShapeev16,
    ChenLuOrtner18, 
    ChenNazarOrtner19, 
    OrtnerThomas2020:pointdef
}. However, these studies do not address the issue of the long-range
Coulomb interaction.

We focus here on the reduced Hartree--Fock (rHF)
model~\cite{Solovej1991}, also called the Random Phase Approximation
(RPA) model in the physics literature. Although usually not sufficient
to obtain quantitative results on real materials, this model
reproduces a number of important physical effects (shell structure,
screening, insulator/metal behavior, ...). It is very popular in
mathematical physics because it is similar to the widely used
Kohn--Sham (KS) model (the rHF model is obtained from the KS model by
discarding the exchange-correlation potential) while having much nicer
properties. In particular, the rHF model is strictly convex in the
density, so that the ground-state density, if it exists, is unique.
The study of the rHF model for crystals was initiated in the seminal
contribution \cite{Lions2001}, where the existence of a thermodynamic
limit for perfect crystals was proved. The case of an insulating
crystal with a local defect was addressed in~\cite{Cances2008}, and
the existence and uniqueness of the rHF ground-state density for
disordered crystals was established in~\cite{Cances2013-disorder}. The
dielectric response of perfect crystals at the rHF level of
theory was studied in~\cite{Cances2010} in the insulating case and in
\cite{Levitt2020} in the finite-temperature case.

In this paper, we study the decay properties of the interatomic force
constants both at finite temperature and for insulators at zero
temperature. The physical picture is the following. Imagine a perfect
crystal at mechanical equilibrium, and move an atom $J$ of charge
$Z_{J}$ by an infinitesimal amount $d$. Focusing only on nuclei, this
creates a charge dipole of size $Z_{J} d$ at $J$, which in turn
creates an electrostatic potential decaying as $R^{-3}$ on an atom $I$
at distance $R$ of $J$. However, this discussion completely ignores
the reaction of the electrons. In systems at finite temperature,
electrons are mobile and are able to react to the electrostatic
potential created by the nuclei, effectively totally screening the
dipole and resulting in an exponentially decaying force. In
insulators, by contrast, electrons tend to be tightly bound to nuclei.
They can move rigidly with the nuclei, effectively decreasing the
effective charge $Z_I$, as well as polarize, leading to an effective
dielectric constant that is larger than its value in vacuum.

In this paper, we prove this behavior rigorously. We do this by
deriving a formula for the interatomic force constants in terms of the
screened Coulomb operator, which involves the dielectric operator (the linear response in the total potential to a
small defect potential). Then, using ideas from \cite{Levitt2020} for
the finite temperature case and \cite{Cances2008,Cances2010} for
insulators, we are able to analyse the regularity properties of the
dynamical matrix (Fourier dual of the force constants). At finite
temperature, one can show that the dynamical matrix is analytic;
whereas for insulators at zero temperature, we show that the dynamical
matrix has a singularity at zero. Therefore, at finite temperature,
the force constants decay exponentially, whereas, in the insulating
case, algebraic decay follows from taking the inverse Fourier
transform of the singularity explicitly.

The paper is organized as follows: in \cref{sec:setting}, we recall the mathematical structure of the periodic rHF model (\cref{sec:periodic}) and introduce the defect rHF model and define the interatomic force constants (\cref{sec:defect}). In Theorem~\ref{thm:W}, we write the interatomic force constants in terms of the so-called screened Coulomb operator. In \cref{sec:main}, we describe the main results of this paper: Theorem~\ref{thm:finite-temperature} for finite temperature and Theorem~\ref{thm:insulators} for insulators at zero temperature. The proofs of the main results are contained in \cref{sec:translation} to \cref{sec:locality}. 
In \cref{sec:translation}, we introduce the dynamical matrices $D_{ss'}(q)$, a useful tool to derive the decay properties of the interatomic force constants. 
Sections~\ref{sec:pf-metals}~and~\ref{sec:locality} are dedicated to the particular case of finite temperature and insulators at zero temperature, respectively.

\section{Setting}
\label{sec:setting}
\subsection{Perfect crystals in the rHF approximation}
\label{sec:periodic}
We consider a perfect crystal with lattice $\mathbb L \subset \mathbb R^d$
and Wigner--Seitz cell (or any other unit cell) $\Omega$. The cell $\Omega$
contains $N_{\rm at}$ atoms with positions $\tau_{s}$ and atomic charges $Z_{s}$ for $s =1,\dots, N_{\rm at}$, so that there are
$N_{\rm el} = \sum_{s=1}^{N_{\rm at}} Z_{s}$ electrons per unit cell (we ignore
spin throughout this paper). For technical reasons (see Remark \ref{rem:point}) we will consider
smeared nuclei, with a smooth and compactly supported charge distribution
$m : \R^{3} \to \R_{+}$. The nuclear charge distribution is the
$\mathbb L$-periodic function
$$
\rho^{\rm nuc}_{\rm per}(x) := \sum_{R \in \mathbb L} \sum_{s=1}^{N_{\rm at}} Z_s m(x-(R+\tau_s)).
$$

In the rHF framework, an electronic state is characterized by a one-body reduced density matrix (1-RDM) $\gamma \in \mathcal S(L^2(\mathbb R^3))$, a bounded self-adjoint operator on $L^2(\mathbb R^3)$, satisfying $0 \le \gamma \le 1$ in the sense of quadratic forms, and locally trace-class (i.e. such that for all compact subsets $K \subset\mathbb R^3$, the positive operator $\mathds 1_K \gamma \mathds 1_K$ is trace-class). Recall that the density $\den(A)$ associated with a locally trace-class operator $A$ on $L^2(\mathbb R^3)$ is the unique function in $L^1_{\rm loc}(\mathbb R^3)$ such that
$$
\forall\psi \in L^\infty_{\rm c}(\mathbb R^3), \quad \tr(A \psi) = \int_{\mathbb R^3} \den(A) \psi.
$$

The periodic rHF equations are \cite{nier1993schrodinger,Cances2008}
\begin{align}
  \label{per_1}\gamma^0_{\rm per} &= f_{T}\left(H_{\rm per}^0-\mu^0\right), \\
  \label{per_2}-\Delta V^0_{\rm per} &= 4\pi \left( \rho^{\rm nuc}_{\rm per}-\rho^0_{\rm per} \right),\\
  \label{per_3}\rho_{\rm per}^{0} &= {\rm den}(\gamma^{0}_{\rm per}),
\end{align}
where $H_{\rm per}^0 \coloneqq -\tfrac 12 \Delta + V^0_{\rm per}$ is the periodic Hamiltonian, $\rho^0_{\rm per}$ is the density of the 1-RDM $\gamma^0_{\rm per}$, the total Coulomb potential $V^0_{\rm per}$ is $\mathbb L$-periodic, and $f_T$ the Fermi-Dirac occupation function at temperature $T \ge 0$ given by
\begin{align*}
  &f_T(\epsilon) = \frac{1}{1+e^{\frac{\epsilon}{T} }}, \quad && \mbox{if $T>0$ (positive temperature)}, \\
  &f_0(\epsilon) = \mathds 1_{(-\infty,0]}(\epsilon), \quad && \mbox{if $T=0$ (zero temperature)}.
\end{align*} 
The electronic density is constrained to have $N_{\rm el}$ electrons
per unit cell ($\int_{\Omega} \rho_{\rm per}^0 = N_{\rm el}$), which ensures that the \cref{per_2} has a solution.  
The Fermi level $\mu^0$ can be interpreted as the Lagrange multiplier of this charge neutrality constraint. The electrostatic potential $V^0_{\rm per}$ is unique up to an additive constant; we impose uniqueness by also requiring that $\int_{\Omega} V^{0}_{\rm per} = 0$. The mean-field Hamiltonian $H^0_{\rm per}$ is an $\mathbb L$-periodic Schr\"odinger operator, and $V^0_{\rm per}$ is at least locally square-integrable. As a consequence, $H^0_{\rm per}$ is self-adjoint on $L^2(\mathbb R^3)$ with domain $H^2(\mathbb R^3)$, and its spectrum is a union of closed bounded intervals (the bands), see Figure~\ref{fig:bands}.

In the rest of this paper, we fix the lattice $\mathbb L$, positions and charges of the $N_{\rm at}$ atoms in the unit cell, and the corresponding solution $\gamma^{0}_{\rm per}, V_{\mathrm{per}}^0, \rho_{\rm per}^0$ to \cref{per_1,per_2,per_3}. Furthermore, we will study both the finite temperature $(T > 0)$ and the zero temperature insulating case:
\begin{as}
  \label{assumption_gap}
  If $T=0$, then $\mu^{0} \notin \sigma(-\tfrac 1 2 \Delta + V^{0}_{\rm per})$.
\end{as}

\begin{figure}[!ht]
    \centering
    \includegraphics[width=.95\textwidth]{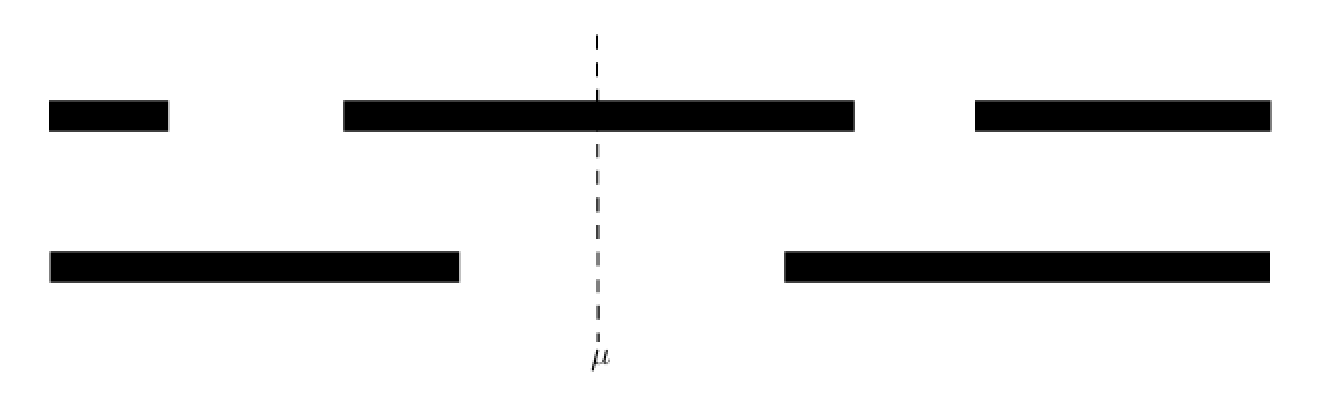}
    \caption{Schematic plots of the spectrum $\sigma(H_{\rm
    per}^0)$ of a metal (top) and
 insulator (bottom).}
    \label{fig:bands}
  \end{figure}

We label by $(R,s)$ the atom located at point $R+\tau_s$, $R \in \L$, $s=1,\dots,N_{\rm at}$.
The $\alpha$-component of the force acting on atom $(R,s)$ is given for $\alpha=1,2,3$ by
\begin{align}
  F_{Rs\alpha} &= -\int_{\R^{3}} Z_s m(x-R-\tau_{s}) \frac{\partial V^{0}_{\rm per}}{\partial x_{\alpha}}(x)\D{x}
  =\int_{\R^{3}} Z_s\frac{\partial m}{\partial x_{\alpha}}(x-R-\tau_{s}) V^{0}_{\rm per}(x)\D{x}.
  \label{eq:FRs}
\end{align}
The crystal is at mechanical equilibrium if $F_{Rs\alpha} = 0$, but we do
not assume this in our results.
\begin{remark}[Hellmann--Feynman]
  For finite systems, the force on an atom is (minus) the
  gradient of the energy with respect to the position of this atom, as can be shown from the
  Hellmann--Feynman theorem. In the crystal case, the energy is
  infinite; however, the force can be obtained as (minus)
  the gradient of the \textit{relative} energy, defined as
  the difference between the energies of the perturbed crystal and the reference crystal in the thermodynamic limit (see \cite{Cances2008}). We
  do not use this formalism here and simply define the $\alpha$-component of the force on atom $(R,s)$ by \cref{eq:FRs}.
\end{remark}

\subsection{The defect problem}
\label{sec:defect}
A small displacement 
$$
d = (d_{Rs})_{\above{R\in \mathbb L,}{s = 1,\dots,N_{\rm at}}} \in \ell^1(\L \times \{1,\dots,N_{\rm at}\};\R^3)
$$
of the atoms from their reference configuration
induces a nuclear charge displacement
\begin{align*}
  \nu^{d} = \sum_{R \in \mathbb L} \sum_{s=1}^{M} Z_s \big[ m\big(\cdot-(R+\tau_{s}+d_{Rs})\big) - m\big(\cdot-(R+\tau_{s})\big)\big] \in L^1(\R^3) \cap L^\infty(\R^3).
\end{align*}
This induces a defect potential $V_{\rm def}^{d} \coloneqq - v_{\rm c} \nu^{d}$, where $v_{\rm c}$ is the \textit{Coulomb operator}, given by
\begin{align*}
  (v_{\rm c} \rho)(x) = \int_{\R^{3}}\frac{\rho(y)}{|x-y|} \D{y}, 
  \qquad \widehat{v_{\rm c} \rho}(q) 
  = 
\frac
    {4\pi}
    {|q|^2}
\widehat{\rho}(q).
\end{align*}
In turn this defect potential induces a reorganization of the
electronic charge, resulting in the self-consistent equation
\begin{align}
  V_{\rm tot}^{d} = V_{\rm def}^{d} + v_{\rm c} \Big( 
  \den f_{T}\big(H_{\rm per}^0 + V_{\rm tot}^{d}  - \mu^{0}\big) 
  - \rho_{\mathrm{per}}^0
  \Big).
  \label{eq:def_rhf}
\end{align}
The existence and uniqueness of solutions of this equation for $d$ small enough in $\ell^1$-norm have been established in the zero-temperature case
in ~\cite{Cances2008,Cances2010} by a variational method, with
$V_{\rm tot}^{d}$ in $\mathcal C' = v_{\rm c}^{1/2} L^{2}(\R^{3})$, the
dual space to the Coulomb space
$\mathcal C = v_{\rm c}^{-1/2} L^{2}(\R^{3})$. In the
finite-temperature case, existence and uniqueness have been
established in \cite{Levitt2020} for $d$ small enough in $\ell^1$ (since $\nu^d$ is small in $H^{-2}$) by a perturbation argument, with $V_{\rm tot}^{d} \in L^{2}$.

Therefore, for both insulators at zero temperature and in the finite temperature case, we may define the force acting on atom $(R,s)$ as 
\begin{align}
  F_{Rs}^{d} &\coloneqq \int_{\R^{3}} Z_{s} \nabla m(x-R-\tau_{s}-d_{Rs}) \,
  \big(V_{\rm per}^{0}(x) + V_{\rm tot}^{d}(x)\big) \D{x}.
  \label{eq:forces}
\end{align}
The central object of our study are the \textit{interatomic force constants}, defined as
\begin{align}
  \boxed{ C_{Rs,R's'}  = \left.\frac{\partial F_{Rs}^d}{\partial d_{R's'}}\right|_{d=0} \in \mathbb R^{3\times3}.}
  \label{eq:IFC}
\end{align}

We show that this is indeed a well-defined object and give a formula for the interatomic force constants in terms of the so-called \textit{screened Coulomb operator}:
\begin{theorem}
\label{thm:W}
    Suppose that Assumption~\ref{assumption_gap} is satisfied. 
    Then, the mapping $d \mapsto F_{Rs}^{d}$ is well-defined in a neighborhood of $0$ in $\ell^1(\L \times \{1,\dots,N_{\rm at}\};\R^3)$, differentiable at $0$ and 
    we have
  \begin{equation}\label{eq:CIJ_ij}
    \boxed{ \big[ C_{Rs,R's'} \big]_{\alpha\beta} =\delta_{RR'}\delta_{ss'} [c_{s}]_{\alpha\beta} + Z_s Z_{s'} \bigg\langle   \frac{\partial m}{\partial x_\alpha} (\cdot - R- \tau_s),   
      W \!\left(  \frac{\partial m}{\partial x_\beta}(\cdot - R'-\tau_{s'}) \!\right)  \!\!\bigg\rangle_{\!\! L^2}},
  \end{equation}
  where $c_s \coloneqq - Z_s \int_{\mathbb R^3} \nabla^2m(\cdot - \tau_s) V_{\mathrm{per}}^0$ and the \emph{screened Coulomb
  operator} $W$ 
  is a bounded linear operator $H^{-2} \to L^2$ for $T>0$, and $\mathcal C \to \mathcal C'$ for $T=0$. 
\end{theorem}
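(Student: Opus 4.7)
The plan is to follow the classical dielectric-response strategy: (i) apply existing existence/uniqueness results to establish smoothness of $d \mapsto V_{\rm tot}^d$ near $d=0$; (ii) linearize both the self-consistent equation \cref{eq:def_rhf} and the force functional \cref{eq:forces} to identify the two terms appearing in \cref{eq:CIJ_ij}; and (iii) package the result in terms of the screened Coulomb operator $W = \varepsilon^{-1} v_{\rm c}$, where $\varepsilon = 1 - v_{\rm c}\chi_0$ is the dielectric operator and $\chi_0$ denotes the Fr\'echet derivative of the density map $V \mapsto {\rm den}\,f_T(H_{\rm per}^0 + V - \mu^0)$ at $V = 0$.

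For step (i), I would apply an analytic implicit function theorem to the fixed-point form of \cref{eq:def_rhf}. At finite temperature, smoothness of the nonlinear density response comes from a contour representation of $f_T$ as in \cite{Levitt2020}, and bijectivity of $\varepsilon$ on $L^2$ is established there. In the insulating case, smoothness of $V \mapsto {\rm den}\,f_0(H_{\rm per}^0 + V - \mu^0)$ near $V = 0$ follows from a contour integral encircling the occupied bands, using Assumption~\ref{assumption_gap}, and bijectivity of $\varepsilon$ on $\mathcal{C}'$ is the main result of \cite{Cances2010}. Either way, one obtains local existence of $V_{\rm tot}^d$ for small $\|d\|_{\ell^1}$, $C^1$-dependence on $d$, and the linear response formula
\begin{equation*}
  \left.\frac{\partial V_{\rm tot}^d}{\partial d_{R's'}^\beta}\right|_{d=0} = \varepsilon^{-1}\left.\frac{\partial V_{\rm def}^d}{\partial d_{R's'}^\beta}\right|_{d=0}.
\end{equation*}

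For step (ii), direct differentiation of the explicit expression of $\nu^d$ gives $\partial_{d_{R's'}^\beta}\nu^d|_{d=0} = -Z_{s'}\frac{\partial m}{\partial x_\beta}(\cdot - R' - \tau_{s'})$, hence $\partial_{d_{R's'}^\beta} V_{\rm def}^d|_{d=0} = Z_{s'}\, v_{\rm c}\big(\frac{\partial m}{\partial x_\beta}(\cdot - R'-\tau_{s'})\big)$ and $\partial_{d_{R's'}^\beta} V_{\rm tot}^d|_{d=0} = Z_{s'}\, W\!\big(\frac{\partial m}{\partial x_\beta}(\cdot - R'-\tau_{s'})\big)$. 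Since $V_{\rm tot}^0 = 0$, differentiating the force \cref{eq:forces} produces exactly two contributions. Derivatives of $\nabla m(x - R - \tau_s - d_{Rs})$, paired with $V^0_{\rm per}$, are non-zero only when $(R',s') = (R,s)$; $\mathbb{L}$-periodicity of $V^0_{\rm per}$ then collapses the resulting integral to $\delta_{RR'}\delta_{ss'}[c_s]_{\alpha\beta}$. Derivatives of $V_{\rm tot}^d$, paired with $Z_s \frac{\partial m}{\partial x_\alpha}(\cdot - R - \tau_s)$, yield the off-diagonal term of \cref{eq:CIJ_ij}.

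Boundedness of $W$ then follows by composition: $v_{\rm c}$ is a bounded Fourier multiplier $H^{-2}\to L^2$ at finite temperature and an isometric bijection $\mathcal{C}\to\mathcal{C}'$ at zero temperature, while $\varepsilon^{-1}$ is bounded on $L^2$ (resp.\ on $\mathcal{C}'$) from the invertibility results cited above. The main obstacle, I expect, lies in the zero-temperature insulating case: linear-response objects live naturally in the Coulomb space framework $\mathcal{C}/\mathcal{C}'$ rather than in Sobolev spaces, and one must verify that $\frac{\partial m}{\partial x_\beta}(\cdot - R'-\tau_{s'}) \in \mathcal{C}$, which follows from the vanishing integral $\int \frac{\partial m}{\partial x_\beta} = 0$ providing the required decay of its Fourier transform at the origin. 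The smoothness of the spectral projector in the insulating case further requires a careful contour argument around the gap, somewhat more delicate than the finite-temperature case where $f_T$ is already entire.
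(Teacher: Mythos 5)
Your overall route is the same as the paper's: cite the well-posedness results of \cite{Cances2008,Cances2010,Levitt2020} for the defect problem, extract the first-order expansion $V_{\rm tot}^d = -v_{\rm c}\nu^d + v_{\rm c}\chi_0 V_{\rm tot}^d + O(\|\nu^d\|^2)$, invert the dielectric operator to get $\partial_{d_{R's'\beta}}V^d_{\rm tot}|_{d=0} = Z_{s'}W\,\partial_\beta m(\cdot-R'-\tau_{s'})$, and differentiate \cref{eq:forces}, with the diagonal term $\delta_{RR'}\delta_{ss'}[c_s]_{\alpha\beta}$ coming from the derivative hitting $\nabla m(\cdot-R-\tau_s-d_{Rs})$ against $V^0_{\rm per}$ (using $V^0_{\rm tot}=0$ and periodicity). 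All of that is correct and matches the paper, as does your observation that $\partial_\beta m\in\mathcal C$ because $\widehat{\partial_\beta m}(q)=iq_\beta\widehat m(q)$ vanishes linearly at the origin.

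There is, however, a genuine error in your justification of the mapping properties of $W$ at finite temperature. The Coulomb operator $v_{\rm c}$ is \emph{not} a bounded operator $H^{-2}\to L^2$: its multiplier $4\pi|q|^{-2}$ satisfies $4\pi|q|^{-2}(1+|q|^2)\to\infty$ as $q\to 0$, so the composition argument ``$v_{\rm c}$ bounded $H^{-2}\to L^2$, $\epsilon^{-1}$ bounded on $L^2$'' cannot work. For the same reason $\epsilon=1-v_{\rm c}\chi_0$ is not even bounded on $L^2$ at finite temperature, since $[v_{\rm c}\chi_0]_{00}(q)\sim -4\pi\,{\rm DOS}/|q|^2$. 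The factorization $W=\epsilon^{-1}v_{\rm c}$ is only formal here: the boundedness $W:H^{-2}\to L^2$ must be obtained for the product as a whole, e.g.\ by writing $W_q=(-\chi_{0,q}+v_{{\rm c},q}^{-1})^{-1}$ and using ${\rm DOS}>0$ to control the zero Fourier mode, which is exactly the content of \cite[Lemma 5.2]{Levitt2020} that the paper invokes. This cancellation of the $|q|^{-2}$ singularity by the response ($[\epsilon^{-1}]_{00}(q)\sim|q|^2/{\rm DOS}$) is the screening mechanism at the heart of the whole paper, so it is worth getting right. At zero temperature your composition argument is fine, since $v_{\rm c}:\mathcal C\to\mathcal C'$ is an isometry by construction of the Coulomb spaces and \cite{Cances2010} gives bounded invertibility of the dielectric operator on $\mathcal C'$.
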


\begin{proof}
  In both cases, it is shown \cite{Cances2008,Cances2010,Levitt2020} that
    \begin{align}
      V_{\rm tot}^{d} 
      = -v_{\rm c} \nu^{d} + v_{\rm c}\chi_0 V_{\rm tot}^{d}
      + O(\|\nu^{d}\|^{2}_{Y})
      \label{eq:V(Vdef)-insu}\; \text{ in } X
    \end{align}
    with $X = \mathcal C', Y = \mathcal C$ in the zero-temperature case and $X=L^2, Y = H^{-2}$ in the finite-temperature case, and
    where the independent-particle susceptibility operator $\chi{_0}$
    is the differential of the map
    $V \mapsto \den f_{T}\big(H_{\rm per}^0 + V - \mu^{0}\big)$ taken
    at $V=0$. Since $\|\nu^d\|_Y \lesssim \|d\|_{\ell^1}$, it follows that
    \begin{align*}
      V_{\rm tot}^{d} = - W \nu^{d} + O(\|d\|_{\ell^1}^{2})
    \end{align*}
    in $X$, with $W = \epsilon^{-1} v_{\rm c}$ the screened Coulomb operator
    and $\epsilon = 1 - v_{\rm c}\chi_0$ the dielectric operator. It
    follows that $d \mapsto F^d$ is differentiable at $0$, with
    \begin{align}
      \frac
      {\partial V_{\rm tot}}
      {\partial d_{R's'\beta}}
      = - W \frac
      {\partial \nu^d}
      {\partial d_{R's'\beta}}
      = Z_{s'} 
      W \frac
      {\partial m}
      {\partial x_{\beta}}
      (\,\cdot - R' - \tau_{s'} - d_{R's'} ).
    \end{align}
    Therefore, for $(R,s) \not= (R',s')$ we have
    \begin{align}
      C_{Rs\alpha,R's'\beta}
      &= Z_s\int_{\R^{3}} 
      \frac
      {\partial m}
      {\partial x_{\alpha}}(x-R-\tau_{s})
      \,
      \frac
      {\partial V_{\rm tot}}
      {\partial d_{R's'\beta}}
      \bigg|_{d = 0}(x) \D{x} \nonumber\\
      &=  Z_sZ_{s'} \int_{\R^{3}} 
      \frac
      {\partial m}
      {\partial x_{\alpha}}(\,\cdot -R-\tau_{s})
      \,
      W
      \frac
      {\partial m}
      {\partial x_{\beta}}
      (\,\cdot - R' - \tau_{s'}),
    \end{align}
as required.
\end{proof}
\begin{remark}
  \label{rem:point}
  We chose to study smeared nuclei for technical simplicity. With
  point nuclei ($m = \delta$), while the formula \eqref{eq:CIJ_ij} is
  still well-defined, the perturbation to first order
  $\partial{}m/\partial{}x_\beta{}$ is a derivative of the Dirac
  distribution, which is not $\Delta{}$-bounded, so that the usual
  methods of proving the convergence of Rayleigh--Schrödinger
  perturbation theory (and, \textit{a fortiori}, Theorem \ref{thm:W}) are
  inapplicable. One can in this case use the method of
  \cite{hunziker1986distortion}, which unitarily maps the Schrödinger
  operator with variable nuclei positions to a Schrödinger operator
  with fixed nuclei positions but variable metric, to which regular
  perturbation theory is applicable.
\end{remark}

\section{Main results}
\label{sec:main}

\subsection{Finite temperature}

\begin{theorem}[Finite temperature]
  \label{thm:finite-temperature}
  Suppose that $T > 0$. Then, the force constants decay exponentially: there exists
  $C,\eta > 0$ such that
  \[
    \big| C_{Rs,R's'} \big| \leq C e^{-\eta | R-R'|},
  \]
  for all $R, R' \in \mathbb L$ and $s,s'=1, \dots, N_{\rm at}$.
\end{theorem}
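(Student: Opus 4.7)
The plan is to work with the dynamical matrix $D_{ss'}(q)$ introduced in \cref{sec:translation}, which is the discrete Bloch--Fourier transform of the interatomic force constants over the lattice variable $R-R'$. By translation invariance, $C_{Rs,R's'}$ depends only on $R-R'$, and a standard Paley--Wiener argument for Fourier series on the torus $\Omega^*$ reduces the exponential decay of $|C_{Rs,R's'}|$ to the existence of an analytic extension of $q \mapsto D_{ss'}(q)$ to a complex strip $\{q \in \C^d : |\im q| < \eta\}$ around the real Brillouin zone, with uniform bounds on the extension. Using the explicit formula of \cref{thm:W}, this in turn reduces to analyticity of the Bloch fiber $W(q)$ of the screened Coulomb operator $W = \epsilon^{-1} v_{\rm c}$, tested against the smooth compactly supported nuclear derivatives $\partial m(\cdot - \tau_s)$; since these testing functions are in $H^{-2}$ with arbitrarily good decay, their Bloch decompositions are themselves analytic in $q$, so the only nontrivial piece is $W(q)$ itself.

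To establish analyticity of $W(q) = v_{\rm c}(q)\,\epsilon(q)^{-1}$, I would first control the susceptibility $\chi_0(q)$ using the contour-integral representation
\[
f_T(H_{\rm per}^0 - \mu^0) = \frac{1}{2\pi \i}\oint_{\mathcal C} f_T(z-\mu^0)(z - H_{\rm per}^0)^{-1}\,\D z,
\]
where $\mathcal C$ is a contour in $\C$ encircling $\sigma(H_{\rm per}^0)$ and staying at distance $\gtrsim T$ from the Matsubara poles of $f_T$. Conjugating by $e^{\i q \cdot x}$ converts Bloch wavevector dependence into analytic dependence of the shifted fibered Hamiltonian $H^q = e^{-\i q \cdot x} H_{\rm per}^0 e^{\i q \cdot x}$ on $q \in \C^d$, in the sense of Kato's analytic perturbation theory (and in the spirit of Combes--Thomas). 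Plugging this into the contour integral and differentiating the map $V \mapsto \den f_T(H_{\rm per}^0 + V - \mu^0)$ yields $\chi_0(q)$ as an analytic operator-valued function on a strip whose width is controlled by $T$ and by the distance between the contour and the Matsubara poles.

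Next, I would analytically extend $\epsilon(q)^{-1} = (1 - v_{\rm c}(q)\chi_0(q))^{-1}$ to the same strip. For real $q \in \Omega^*$, the invertibility of $\epsilon(q)$ with uniformly bounded inverse is exactly the dielectric analysis of \cite{Levitt2020}; by compactness of $\Omega^*$ and analytic perturbation of bounded invertible operators, this persists on a complex neighborhood of uniform width. The main obstacle is the Coulomb singularity of $v_{\rm c}(q)$ as $q \to 0$: $v_{\rm c}(q)$ acts on the constant Fourier mode as multiplication by $4\pi/|q|^2$, which is unbounded. At finite temperature, however, the density of states at $\mu^0$ is strictly positive, so $\chi_0(q)$ acts on the constant mode by a nonzero quantity $\sim -N(\mu^0)$ uniformly down to $q = 0$; this produces complete Thomas--Fermi-type screening, and the factor $\epsilon(q)^{-1} \sim |q|^2/(4\pi N(\mu^0))$ exactly cancels the singularity, so that $W(q) = v_{\rm c}(q)\epsilon(q)^{-1}$ remains bounded and analytic through $q = 0$. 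Carrying out this cancellation at the operator level, uniformly in a complex neighborhood of the origin, is the technical crux; it is essentially a complex-parameter extension of the finite-temperature dielectric estimates of \cite{Levitt2020}, and this is where I would expect to spend the most effort.

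Once $D_{ss'}(q)$ is shown to be analytic and uniformly bounded on a strip $|\im q| < \eta$, the inverse discrete Fourier transform and a contour shift in $q$ by $\i \eta'\, \hat e$ (with $\eta' < \eta$ and $\hat e$ chosen along $R-R'$) yield $|C_{Rs,R's'}| \lesssim e^{-\eta'|R-R'|}$, which is the claimed exponential decay.
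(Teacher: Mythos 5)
Your proposal is correct and follows essentially the same route as the paper: analyticity of $q \mapsto W_q$ on a complex strip, which at finite temperature hinges on ${\rm DOS}>0$ making $-\chi_{0,q}+v_{{\rm c},q}^{-1}$ uniformly invertible through $q=0$ (the cancellation you rightly identify as the crux, handled in \cite[Lemma~5.2]{Levitt2020}), followed by a Paley--Wiener argument. The only difference is cosmetic: you shift the contour for $D_{ss'}(q)$ in reciprocal space, whereas the paper converts the strip analyticity into boundedness of the weighted operator $w_{-a}Ww_{a}$ on $L^2$ and estimates the matrix element in real space.
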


\begin{remark}
  By tracing the constant in the proof, one obtains $\eta \ge c T$ (for
  some constant $c > 0$ independent of $T$).
\end{remark}

  \begin{remark}
    For metals at zero temperature, one expects a slower decay due to
    Friedel oscillations. For instance, for the free electron gas, $W$
    is a multiplier in Fourier space given by
    \begin{align*}
      W(q)=\frac{4\pi}{|q|^{2} - 4\pi \chi_{0}(q)}
    \end{align*}
    where $\chi_{0}$ is a known function (the Lindhard response function)
    \cite{GiulianiVignale2005}. This function is radial, negative and has a
    finite value at $0$ (full screening), but has a logarithmic
    singularity in the derivative at $|q|=2k_{\rm F}$, where $k_{\rm F}$ is the Fermi wavevector.
    This singularity drives the decay of the IFC, which can be
    computed to be algebraic $|R-R'|^{-3}$ with oscillatory tails. 
    More
    generally, a similar result is expected to hold for metals with a
    well-defined Fermi surface \cite{Simion2005,Giuliani2005}. For
    other systems with a degenerate Fermi surface, explicit
    calculations are also possible (for instance, for graphene, one
    expects incomplete screening, with $\chi_{0}$ behaving like a
    constant at $q=0$).

    In both cases, the main difficulty is not in studying the
    asymptotics of $C_{Rs,R's'}$, but in showing the well-posedness of
    the defect problem and the regularity of $d \mapsto F^{d}$. A more
    careful study will be the topic of further research.
\end{remark}

\subsection{Insulators at zero temperature}
\begin{theorem}[Insulators]
  \label{thm:insulators}
  Suppose that $T=0$ and Assumption~\ref{assumption_gap} is satisfied. Then, the force constants decay algebraically:
  \begin{align}
    C_{Rs,R's'} &= - (Z_{s}^{\star})^{\t}\nabla^{2} 
    \Phi_{\rm M}(R-R'+\tau_{s}-\tau_{s'})
    Z_{s'}^{\star} + \mathcal O(|R-R'|^{-4})
    \nonumber
  \end{align}
  for all $R, R' \in \mathbb L$ and $s,s'=1, \dots, N_{\rm at}$, where
  \begin{align*}
    \Phi_{\rm M}(x) \coloneqq \frac 1 {\sqrt{\det \epsilon_{\rm M}}} \frac 1 {\sqrt{x^\t  \epsilon_{\rm M}^{-1} x}}
  \end{align*}
  is the
  {dielectric-constant screened Coulomb interaction} (the Green function
  of the Poisson equation $-{\rm div}(\epsilon_{\rm M} \nabla V) = 4\pi \rho$).
  Here, the {macroscopic
  dielectric constant} $\epsilon_{\rm M}$ is a positive definite $3 \times 3$ matrix, and the \emph{Born effective charges} $Z_{s}^{\star} \in \R^{3\times3}$ are
  defined in the proof.
\end{theorem}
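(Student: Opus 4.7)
The plan is to pass to the Fourier/Bloch dual. By the formula of \cref{thm:W}, the force constants $C_{Rs,R's'}$ are the inverse Fourier coefficients over the Brillouin zone $\Omega^*$ of a dynamical matrix $D_{ss'}(q)$ obtained (as in \cref{sec:translation}) by sandwiching the Bloch fiber $W_q$ of the screened Coulomb operator $W=\epsilon^{-1}v_{\rm c}$ between the atomic form factors $Z_s\partial_\alpha m(\cdot-\tau_s)$. Since $C_{Rs,R's'}$ is a lattice Fourier series, algebraic decay at the rate $|R-R'|^{-3}$ is equivalent to $D_{ss'}(q)$ having a leading homogeneous but non-smooth singularity at $q=0$ of the form $(Z_s^\star)^\t q\,q^\t Z_{s'}^\star/(q^\t \epsilon_{\rm M} q)$ modulo a Hölder-regular remainder.

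Following \cite{Cances2010,Cances2008}, I would first analyze the Bloch fiber of the dielectric operator $\epsilon_q = 1-(v_{\rm c})_q \chi_{0,q}$ near $q=0$. Under the gap assumption, $\chi_{0,q}$ is an analytic family of bounded operators obtained via a Cauchy contour integral around the occupied bands. Combined with the $|q|^{-2}$ singularity of $(v_{\rm c})_q$, this should give an expansion of the shape
\[
\epsilon_q \sim \frac{q^\t \epsilon_{\rm M} q}{|q|^2}\,P_0 + (I-P_0)\epsilon_q^{\rm reg}(I-P_0),
\]
where $P_0$ is the projector onto constants on $\Omega$, the second block is uniformly bounded and invertible, and the positive-definite macroscopic dielectric tensor $\epsilon_{\rm M}$ arises from second-order Rayleigh--Schrödinger perturbation theory applied to $H^0_{\rm per}$. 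Inverting and multiplying by $(v_{\rm c})_q$ produces $W_q$ as a rank-one singular term $(4\pi/q^\t\epsilon_{\rm M} q)\,P_0\otimes P_0$ plus a Hölder-continuous bounded remainder $S_q$.

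The Born effective charges $Z_s^\star\in\R^{3\times3}$ are then defined as the matrices recording how the atomic form factor $Z_s\partial_\beta m(\cdot-\tau_s)$ couples through the singular part of $W_q$: in Fourier, the $P_0$-projection selects the total dipole moment seen by a rigid motion of nucleus $s$, namely the bare contribution $Z_s\delta_{\alpha\beta}$ plus the first-order electronic polarization response carried by $\chi_0$. Isolating the leading singularity of the dynamical matrix as
\[
D^{\rm sing}_{ss'}(q) = \frac{4\pi}{q^\t \epsilon_{\rm M} q}\,[(Z_s^\star)^\t q]\,[q^\t Z_{s'}^\star]\,e^{-\i q\cdot(\tau_s-\tau_{s'})},
\]
and using the distributional identity $\widehat{\Phi_{\rm M}}(q) = 4\pi/(q^\t\epsilon_{\rm M} q)$, each factor of $q_\alpha$ corresponds to a derivative $-\i\partial_\alpha$, and the inverse lattice Fourier transform recovers exactly the dipole--dipole term $-(Z_s^\star)^\t \nabla^2 \Phi_{\rm M}(R-R'+\tau_s-\tau_{s'})Z_{s'}^\star$.

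The hard part will be the remainder estimate: to get the $\mathcal O(|R-R'|^{-4})$ correction, the regular part of the dynamical matrix must be $C^{1,\alpha}$ at $q=0$ so that its inverse lattice Fourier transform decays one power faster than the leading term. This requires a careful expansion of $\chi_{0,q}$ beyond leading order by differentiating the Cauchy contour integral around the occupied bands, together with a detailed tracking of how the non-analytic $|q|^{-2}$ factor from $(v_{\rm c})_q$ interacts with the anisotropy of $\epsilon_{\rm M}$. The spectral gap (Assumption~\ref{assumption_gap}) is essential throughout, as it keeps the integration contour uniformly away from the spectrum and provides the analytic/Hölder control needed to close the argument in the spirit of \cite{Cances2010}.
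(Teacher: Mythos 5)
Your overall strategy coincides with the paper's: pass to the dynamical matrix $D_{ss'}(q)$, expand $\chi_{0,q}$ analytically from the contour representation, invert the dielectric operator blockwise in the decomposition $L^2_{\rm per}=\C e_0\oplus e_0^\perp$, isolate a rank-one singularity of $W_q$ of the form $4\pi/(q^\t\epsilon_{\rm M}q)$, and invert the Fourier transform of the homogeneous singular part to get the dipole--dipole term. However, there are two concrete gaps. First, your proposed structure $\epsilon_q \sim \frac{q^\t\epsilon_{\rm M}q}{|q|^2}P_0 + (I-P_0)\epsilon_q^{\rm reg}(I-P_0)$ drops the off-diagonal blocks $P_0\epsilon_q(I-P_0)$ and $(I-P_0)\epsilon_q P_0$, which are $O(|q|)$ (the vectors $B_G\cdot q$ in the paper's \cref{lem:W}). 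After the Schur complement these blocks contribute at leading order: they produce the local-field correction to $\epsilon_{\rm M}$, and they make the singular part of $W_q$ a \emph{full} rank-one operator $4\pi\,w_G(q)\overline{w_{G'}(q)}/(q^\t\epsilon_{\rm M}q+\dots)$ with $w_G(q)=b_G\cdot q+O(|q|^2)$ for $G\neq 0$. Your claimed remainder $S_q$ is therefore not Hölder-continuous and bounded: its $0G$ blocks are homogeneous of degree $-1$ and its $GG'$ blocks are bounded but discontinuous at $q=0$, and precisely these terms generate the $G\neq 0$ sum in the Born effective charges \eqref{eq:Z*}. With your decomposition as written you would obtain $Z_s^\star=Z_sI$ and the wrong $\epsilon_{\rm M}$, even though you correctly describe in words what $Z_s^\star$ should contain.

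Second, the remainder estimate you propose would fail quantitatively. A periodic function that is merely $C^{1,\alpha}$ has Fourier coefficients decaying only like $|R|^{-1-\alpha}$, which is far weaker than the claimed $O(|R-R'|^{-4})$ and indeed weaker than the leading $|R-R'|^{-3}$ term itself. Moreover, $D_{ss'}(q)-D^{\rm sing}_{ss'}(q)$ is \emph{not} smooth at $q=0$: it contains terms homogeneous of degree $1,2,\dots$ of the form $p_n(q)/(q^\t\epsilon_{\rm M}q)^N$. The paper handles this by peeling off a finite hierarchy of such homogeneous pieces and computing their inverse Fourier transforms explicitly via homogeneous distributions and principal values (\cref{lem:g}), obtaining $O(|R|^{-(n+3)})$ for the degree-$n$ piece by homogeneity rather than by smoothness; only after subtracting enough of these is the genuine remainder $C^4$, which is the regularity actually needed to conclude $O(|R-R'|^{-4})$. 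You would need to replace your $C^{1,\alpha}$ step with an argument of this type (or an equivalent Besov/homogeneity argument) to close the proof.
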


\begin{remark}
  The interaction is of dipole-dipole type: it corresponds to the
  Coulomb interaction of a pair of dipoles at $R+\tau_{s}$ and
  $R'+\tau_{s'}$ oriented in the $\alpha$ and $\beta$ direction
  respectively. It is instructive to compare this result with the one
  obtained with simpler models.

  \begin{itemize}
  \item When ignoring the reaction of electrons (setting
    $\chi_{0} = 0$ so that $W = v_{\rm c}$), we get the same dipole-dipole
    interaction with $\epsilon_{\rm M}=1$ and $Z_{s}^{\star} = Z_{s} I$.
  \item In the opposite regime, we can imagine electrons as tightly bound
    to nuclei (although not necessarily their nuclei of origin),
    forming a cloud that moves rigidly with the atoms. The resulting
    system is composed of effective ions, with a partial charge equal
    to the number of protons of the atom minus the number of electrons
    bound to it. The resulting interaction is a dipole-dipole
    interaction with $Z_{s}^{\star}$ equal to these ionic charges, but
    $\epsilon_{\rm M}=1$. For instance, we may conceptualize the NaCl crystal as made
    up of ions Na$^{+}$ (charge $+1$) and Cl$^{-}$ (charge $-1$), and
    indeed, the Born effective charges calculated from density
    functional theory in simple
    ionic crystals are very close to these values.
  \end{itemize}
  The notion of Born effective charges can then be seen as a way to
  merge the insights from both previous models (ions with partial
  charges, and homogenized model leading to a dielectric constant).

  Note that when taking for the atomic density $m$ a diffuse function
  (i.e. scaling $m_{\lambda}(x) = \lambda^{-3} m(x/\lambda)$ with
  $\lambda$ large), $Z_{s}^{\star}$ converges to $Z_{s}I$, consistent
  with the scaling of \cite{Cances2010}. This regime corresponds to
  ignoring the lattice-scale oscillations (``local field effects''),
  in a way similar to homogenization scalings.
  \end{remark}

  \begin{remark}
    Our method of proof actually gives a full asymptotic expansion of
    $C_{Rs,R's'}$ in inverse powers of $x=R-R'+\tau_{s}-\tau_{s'}$.
    This expansion includes the usual multipole expansion (derivatives
    of $\Phi_{\rm M}$), but also other terms, derivatives of
    $|\epsilon_{\rm M}^{-1/2}x|^{2n-3}$ for $n \ge 0$. These additional terms can be
    rationalized as arising from a wavelength dependence of the
    dielectric constant.
  \end{remark}

\begin{remark}
  \label{rem:vanishing}
  The Born effective charges satisfy the sum rule
  $\sum_{s=1}^{N_{\rm at}} Z_{s}^{\star}=0$ (formally, this results
  from global translation invariance, but it is not easy to show it in
  our formalism). Therefore, in some systems, such as simple crystals with
  $N_{\rm at}=1$, the Born effective charges vanish. In this case, the
  leading-order asymptotics for the force constants is given by
  quadrupole-quadrupole interaction: there exist tensors
  $Z^{\star,2}_{s} = [Z^{\star,2}_{s,\alpha_1\alpha_2\beta}]_{\alpha_1\alpha_2, \beta} \in \mathbb R^{3^2 \times 3}$
  for which
   \begin{align}
     C_{Rs,R's'}
    &= -
    (Z_{s}^{\star,2})^\t 
    \,\,
    \nabla^{4} \Phi_{\rm M}( R - R' + \tau_{s} - \tau_{s'})
    \,\,
    Z^{\star,2}_{s'}
    + \mathcal O(|R-R'|^{-6}). \nonumber
\end{align}
More generally, faster decay is \textit{a priori} possible.
  \end{remark}

The proofs of Theorems~\ref{thm:finite-temperature} and \ref{thm:insulators} are contained in the following sections. 
\section{Translation Invariance}
\label{sec:translation}
In this section, we examine the consequences of translation
invariance, culminating in a reformulation of $C_{Rs,R's'}$ in
reciprocal space.
\subsection{Fourier transforms and series}

We use the following conventions for the three varieties of Fourier
transforms we will use, acting respectively on functions of $\R^{3}$,
$\mathbb L$-periodic functions (such as periodic parts of Bloch
waves), and sequences on $\mathbb L$ (such as displacements and
forces).

\begin{itemize}
\item For $\psi \in \mathcal S(\mathbb R^3)$, we define the Fourier of $\psi$ by
$$
\widehat \psi(k) := \frac{1}{(2\pi)^{3/2}} \int_{\mathbb R^3} \psi(x) e^{-ik \cdot x} \,
\D x. 
$$
The inverse Fourier transform is then given by
$$
\psi(x) = \frac{1}{(2\pi)^{3/2}} \int_{\mathbb R^3} \widehat \psi(k) e^{ik \cdot x} \, \D k.
$$
The Fourier transform and its inverse extend to unitary operators on $L^2(\mathbb R^3)$. We will also write $\widecheck{\psi}(x)$ for the inverse Fourier transform of $\psi$.

\item For $u \in C^\infty_{\rm per}(\Omega)$, we have
  \begin{align*}
    u(x) = \sum_{G \in \mathbb L^\star} c_{G}(u) e_{G}(x),\quad \text{with} \quad c_{G}(u) = \langle  e_{G}, u \rangle = \int_\Omega u(x) \frac{e^{-iG \cdot x}}{\sqrt{|\Omega|}} \, \D x,
  \end{align*}
  where $(e_G)_{G \in \L^\star}$ is the canonical Fourier basis of $L^2_{\rm per}\coloneqq L^2_{\rm per}(\Omega)$ (that is, $e_{G}(x) \coloneqq e^{iG \cdot x}/\sqrt{|\Omega|}$).
  The mapping $u \mapsto (c_G(u))_{G \in \mathbb L^\star}$ extends to a unitary operator from
  $L^2_{\rm per}$ to $\ell^{2}({\mathbb L}^\star)$.

\item For $d \in \ell^{\infty}(\mathbb L)$ with compact support, we have
  \begin{align*}
    d(R) = \frac
    {1}
    {\sqrt{|\Omega^\star|}}
    \int_{\Omega^\star} f(q) e^{-\i q \cdot  R}\D{q} , \quad \text{with} \quad  f(q) &= \frac
    {1}
    {\sqrt{|\Omega^\star|}} \sum_{R \in {\mathbb L}}
    d(R) e^{iq\cdot R}.
  \end{align*}
  The mapping $d \mapsto f$ extends to a unitary operator from
  $\ell^{2}(\mathbb L)$ to $L^{2}_{\rm per}(\Omega^{\star})$.
\end{itemize}

\subsection{Bloch transform}

The Bloch transform, $\mathcal B$, is unitary from $L^{2}(\mathbb{R}^{3})$ to
\begin{align*}
  L^{2}_{\rm qp}(\R^{3}, L^{2}_{\rm per}) \coloneqq 
  \left\{ 
    u_\bullet \in L^2_{\rm loc}(\mathbb R^3;L^2_{\rm per}) 
    \; | \; 
    u_{k+G}(x) = e^{-iG\cdot x}u_{k}(x)
    \,\, \forall \, G \in \mathbb L^\star 
    \mbox{, a.a. } x,k \in \mathbb R^3 \right\}
\end{align*}
with natural inner product 
$
    (u_\bullet,v_\bullet)_{L^{2}_{\rm qp}(\R^{3}, L^{2}_{\rm per})} 
    \coloneqq \int_{\Omega^\star} \braket{ u_k, v_k }_{L^2_{\mathrm{per}}} \D{k}.
$
For smooth functions $\psi$, we have
\begin{align*}
  (\mathcal B \psi)_{k}(x) &= u_k(x) = \frac 1 {\sqrt{|\Omega^{\star}|}} \sum_{R \in \mathbb L} \psi(x+R) e^{-ik \cdot (x+R)}, \\
  (\mathcal B^{-1} u_\bullet)(x) &= \psi(x) = \frac 1 {\sqrt{|\Omega^{\star}|}} \int_{\Omega^\star} u_k(x) e^{i k \cdot x} \D{k.}
\end{align*}
  We will use the simple relationship between the Bloch transform and the Fourier
  transforms/series
  \begin{align}
    \label{eq:fourier_bloch}
    \langle  e_{G}, u_{k} \rangle_{L^2_{\rm per}} &= \widehat \psi(k+G) \quad \mbox{for} \quad u_{\bullet} = \mathcal B \psi
  \end{align}
  (recall that $|\Omega||\Omega^{\star}|=(2\pi)^{3}$). Thanks to this
  relationship it is easy to see that the locality properties of $\psi$
  can be understood from the regularity of $u$, and vice-versa.

  \subsection{Periodic operators}
  
  The operator $W$ appearing in the key formula
\eqref{eq:CIJ_ij} for $C_{Rs,R's'}$ is $\L$-periodic.

As is standard, any bounded linear operator $A$ on $L^2(\mathbb R^3)$ commuting with lattice translations can be decomposed by the Bloch transform in the sense that there exists a function
$A_\bullet \in L^\infty_{\rm qp}(\mathbb R^3;\mathcal L(L^2_{\rm per}))$ such that for all
$u \in L^2(\mathbb R^3)$,
\begin{align*}
  (\mathcal B (A u))_{k} = A_{k} (\mathcal B u)_{k}.
\end{align*}
Here, the
subscript $\rm qp$ means that $A_{k + G} = e^{-iG \cdot \bullet} A_k e^{iG \cdot \bullet}$ for all $G \in \mathbb L^\star$ and a.a.~$k\in \mathbb R^3$.

Similar properties hold for $\mathbb L$-translation invariant
unbounded operators on $L^2(\mathbb R^3)$ or bounded operators from
$\mathcal X$ to $\mathcal Y$ (this is the case for $W$), where $\mathcal X$ and $\mathcal Y$ are
Hilbert subspaces of ${\mathcal D}'(\mathbb R^3)$ with appropriate
modifications accounting for domains.

It is sometimes convenient to introduce the Bloch matrices
$[A]_{GG'}(k)$ for $G,G' \in \mathbb L^\star$, representing the operators $A_k$
in the basis $(e_{G})_{G \in{\mathbb L}^\star}$:
\begin{align}
  \label{eq:bloch_matrix}
  [A]_{GG'}(k):= \langle e_G, A_k e_{G'} \rangle_{L^2_{\rm per}}.
\end{align}
These matrices have the properties that
\begin{align}
  \label{eq:bloch_matrices}
  \widehat{(A\psi)}(k+G) = \sum_{G' \in \mathbb L^\star} [A]_{GG'}(k) \; \widehat \psi(k+G')
\end{align}
for all $G \in \mathbb L^\star \mbox{ and a.a. } k \in \mathbb R^3$.

\subsection{The dynamical matrix}

The force constants are invariant with respect to lattice
translations, $C_{Rs,R's'} = C_{(R+T)s,(R'+T)s'}$ for all $T \in \mathbb L$, and thus the linearized force
\begin{align*}
  F_{Rs} = \sum_{R',s'} C_{Rs,R's'} d_{R's'}
  \eqqcolon 
  \sum_{s'} \sum_{R'} C_{ss'}(R-R') d_{R's'}
  ,
\end{align*}
is given by convolutions with kernel
$C_{ss'}(R) \coloneqq C_{Rs,0s'}$. By Theorem~\ref{thm:W}, the sequence $C_{ss'}(R)$ is bounded and
therefore admits a distributional Fourier transform, the \textit{dynamical
  matrix}, given by (in a weak
sense, i.e.~tested against a ${\mathbb L}^\star$-periodic and smooth test function)
\begin{align}
  \label{eq:Dst-1}
  D_{ss'}(q) &= \frac  {1} {\sqrt{|\Omega^\star|}} \sum_{R \in {\mathbb L}} e^{iq\cdot R} C_{ss'}(R).
\end{align}
By Fourier duality, questions about the locality of $C_{ss'}$ can be formulated in terms
of the smoothness of $D_{ss'}$.

\begin{remark}[Phonons]
  The dynamical matrix encodes the free oscillations of the atoms of
  the crystal in the linearized approximation. In particular, the
  eigenvalues of an appropriately mass-scaled version of $D(q)$ give
  access to the phonon band diagram.
\end{remark}

Combining the definition \eqref{eq:Dst-1} of $D_{ss'}$ with the
expression \eqref{eq:CIJ_ij} of $C_{ss'}(R)$, and using the Bloch
expression of $W$, we obtain

\begin{align*}
  &\big[D_{ss'}(q)\big]_{\alpha\beta} 
  = \frac  {1} {\sqrt{|\Omega^\star|}} \left( \delta_{ss'} [c_{s}]_{\alpha\beta} + Z_{s}Z_{s'} \bigg\langle  \sum_{R \in \mathbb L} e^{-iq\cdot R} \partial_{\alpha{}} m (\cdot - R- \tau_s),   
    W \!\left( \partial_{\beta} m(\cdot -\tau_{s'}) \!\right)  \!\!\bigg\rangle_{\!\! L^2} \right)\\
  &=
  \frac  {1} {\sqrt{|\Omega^\star|}} \Bigg( \delta_{ss'} [c_{s}]_{\alpha\beta} \nonumber\\
  &\qquad + Z_{s}Z_{s'} \int_{\Omega^{\star}} \sum_{\substack{R \in \mathbb L\\G, G' \in {\mathbb L}^\star}} e^{-i(q-k)\cdot R} \overline{\widehat{\partial_{\alpha{}} m (\cdot - \tau_s)}(k+G)} [W]_{GG'}(k) \widehat{\partial_{\beta} m(\cdot-\tau_{s'})}(k+G') \D k \Bigg)
  \\
  &= \! 
  \frac
  {\delta_{ss'}}
  {\sqrt{|\Omega^\star|} }
  [c_s]_{\alpha\beta} 
  \!+ \sqrt{|\Omega^\star|} Z_s Z_{s'}
  \!\!\!\!\!
  \sum_{G,G' \in \mathbb L^\star} 
  \!\!\!\! \overline{\widehat{\partial_\alpha m(\cdot\!-\!\tau_s)}(q \! + \! G)} 
  \! \left[ W \right]_{GG'}\!(q) 
  \widehat{\partial_\beta m(\cdot\!-\!\tau_{s'})}(q \! + \! G') 
\end{align*}
again in a weak sense, and where we have used
\eqref{eq:bloch_matrices} to go from the first line to the second and
the formula
$\sum_{R \in \mathbb L} \int_{\Omega^{\star}}e^{ik\cdot R} f(k)\D k = |\Omega^{\star}|f(0)$
to go from the second to the third.

\subsection{Strategy of proof}

In order to conclude the proof of Theorems~\ref{thm:finite-temperature} and \ref{thm:insulators}, we show that \textit{(i)} at finite temperature $D_{ss'}$ is an $\L^\star$-periodic analytic function, and \textit{(ii)} for insulators at zero temperature, the $\L^\star$-periodic function $D_{ss'}$ is analytic away from $\L^*$ and we are able to compute the inverse Fourier transform explicitly near $q = 0$.

Recall that $W = \epsilon^{-1} v_{\rm c}$. All these operators are periodic, with fibers given by
\begin{align*}
  [v_{{\rm c}}]_{GG'}(q) = \delta_{GG'} \frac {4\pi} {|q+G|^{2}}, \quad 
  \epsilon_{q} = 1 - v_{{\rm c},q} \chi_{0,q}, \quad
  [W]_{GG'}(q) = [\epsilon^{-1}]_{GG'}(q) [v_{\rm c}]_{GG'}(q).
\end{align*}
We therefore need to understand the operator $\chi_{0,q}$.

The main obstacle to locality is the long-range Coulomb interaction,
manifested by the divergence of $[v_{\rm c}]_{00}(q)$ as $q \to 0$. The
drastic difference between the zero-temperature insulator and the finite temperature case
comes down to the different behavior of $[\chi_{0}]_{00}$ for small $q$:
\begin{align}  \label{eq:DOS}
  \lim_{q \to 0} \,\, [\chi_0]_{00}(q) 
  &= -\mathrm{DOS}
  \qquad \text{where} \qquad \mathrm{DOS} := -\sum_{n=1}^{+\infty} \fint_{\Omega^\star} f_{T}'(\ep_{nk}-\mu^0) \D{k} \ge 0
\end{align}
is the density of states at the Fermi level, positive for finite
temperature and zero for insulators at zero temperature. 

At finite temperature, the density of states at the Fermi level is
non-zero, and so $ [\chi_0]_{00}(q)$ behaves as a constant, whereas,
for insulators at zero temperature, ${\rm DOS}=0$ and the leading term
is quadratic. As a result, at finite temperature,
$[\epsilon^{-1}]_{00}(q) \sim_{q \to 0} \frac{|q|^2}{\rm DOS}$, which
smoothes out the singularity of the term $[v_{\rm c}]_{00}(q)=\frac{4\pi}{|q|^2}$.
In this case, we observe an exponential decay of the interatomic force
constants. On the other hand, for insulators at zero temperature,
$[\epsilon^{-1}]_{00}(|q| e)$ goes to a constant $1/(e^{\t} \epsilon_{\rm M} e)$ as
$|q| \to 0$ (partial screening). This leads to an algebraic decay of
the force constants. In the following, we will rigorously prove this
heuristic description, starting with the finite temperature case.

\section{Finite Temperature}
\label{sec:pf-metals}

Since $\dos > 0$, it can be shown that $-\chi_{0,q} + v_{{\rm c},q}^{-1}$ is
self-adjoint and positive with
$W_q = ( -\chi_{0,q} + v_{{\rm c},q}^{-1})^{-1} \colon L^2_{\rm per} \to L^2_{\rm per}$
bounded linear operator \cite[Proof of Lemma~5.2]{Levitt2020}. The
same proof also shows that the mapping
$q \mapsto W_q \in \mathcal L(L^{2}_{\rm per}, L^{2}_{\rm per})$ is analytic on a strip $\R^{3}+i[-a,a]^{3}$ for some $a > 0$. 

Let $w_{a}(x) = e^{a \sqrt{1+|\cdot|^{2}}}$ be the exponential weight with
exponent $a > 0$. It follows
from the relationship between Bloch matrices and Fourier transforms
\eqref{eq:bloch_matrices} and usual Paley--Wiener arguments that there
is $a > 0$ such that $w_{-a} W w_{a}$ is bounded on $L^{2}$.

By translation invariance, it is enough to consider $C_{Rs,R's'}$ for
$R'=0$. Then, for $R\not=0$, we have
\begin{align*}
  [C_{Rs,0s'}]_{\alpha\beta} &=Z_s Z_{s'} \bigg\langle   \frac{\partial m}{\partial x_\alpha} (\cdot - R - \tau_s), 
    W   \frac{\partial m}{\partial x_\beta}(\cdot - \tau_{s'})   \bigg\rangle_{\!\! L^2}\\
  &= Z_s Z_{s'}
    \bigg\langle  w_{a} \frac{\partial m}{\partial x_\alpha} (\cdot - R - \tau_s), 
    (w_{-a} W w_{a}) w_{-a} \frac{\partial m}{\partial x_\beta}(\cdot - \tau_{s'})   \bigg\rangle_{\!\! L^2},
    \end{align*}
    so that
    \begin{align*}
    \left|[C_{Rs,0s'}]_{\alpha\beta}\right| &\lesssim 
    \left\|w_{a} \frac{\partial m}{\partial x_\alpha} (\cdot - R - \tau_s)\right\|_{L^2}
    \left\|w_{-a} \frac{\partial m}{\partial x_\beta}(\cdot - \tau_{s'})\right\|_{L^2}
    \lesssim e^{-a'|R|}
  \end{align*}
  for some $a' < a$.

This concludes the proof of Theorem~\ref{thm:finite-temperature}.

\section{Insulators at Zero Temperature}
\label{sec:locality}

We now consider the case $T = 0$ and suppose Assumption~\ref{assumption_gap} is satisfied.

\subsection{Step 1. Sum-over-states formulas}

Explicit expressions of the Bloch matrices of $\chi_0$ and $W$ can be obtained from a spectral decomposition of the Bloch fibers 
$$
H^0_{k} = -\frac 12 \Delta_k + V^0_{\rm per} = \frac12 (-i\nabla + k)^2 + V^0_{\rm per}
$$
of 
$H^{0}_{\rm per} = -\tfrac 12 \Delta + V^0_{\rm per}$. The $H^0_{k}$'s are self-adjoint operators on $L^2_{\rm per}$ with common domain $H^2_{\rm per}$. For all $k \in \mathbb R^3$, we consider an orthonormal basis $(u_{n,k})_{n \in \mathbb N^\star}$ of $L^2_{\rm per}$ consisting of eigenfunctions of $H^0_k$ associated with the eigenvalues of this operator, ranked in non-decreasing order:
$$
H^0_{k} u_{n,k} = \varepsilon_{n,k} u_{n,k}, \quad \langle u_{m,k},u_{n,k} \rangle_{L^2_{\rm per}} = \delta_{mn}, \quad  \varepsilon_{1,k} \le \varepsilon_{2,k} \le \cdots.
$$
Assumption~\ref{assumption_gap} states that $\max\limits_{k \in \Omega^\star} \ep_{N_{\rm el},k} < \mu^0 < \min\limits_{k \in \Omega^\star}\ep_{N_{\rm el}+1,k}$.

Let $\mathscr C$ be a simple closed positively oriented contour in $\{ z \in \mathbb C \colon \re z < \mu^0\}$ encircling $\bigcup_{k\in \Omega^\star}\{ \ep_{n,k} \}_{n=1}^{ N_{\rm el} }$ (see Figure~\ref{fig:contour}) and write 
$$
    f_0(H_{\rm per}^0 + V - \mu^0) = \oint_{\mathscr C} (z - H_{\rm per}^0 + V)^{-1} \frac{\D{z}}{2\pi i},
$$
and thus
$$
\chi_0 V = {\rm den} \oint_{\mathscr C} (z - H_{\rm per}^0)^{-1} V (z - H_{\rm per}^0)^{-1} \frac{\D{z}}{2\pi i}
$$
(recall $\chi_0$ is the derivative of $V \mapsto {\rm den}\left(f_0(H_{\rm per}^0 + V - \mu^0))\right)$ at $V = 0$).

\begin{figure}[ht]
    \centering

\tikzset{every picture/.style={line width=0.75pt}} 

\begin{tikzpicture}[x=0.75pt,y=0.75pt,yscale=-1,xscale=1]

\draw  [fill={rgb, 255:red, 0; green, 0; blue, 0 }  ,fill opacity=1 ][line width=2.25]  (204,67.5) -- (293,67.5) -- (293,73) -- (204,73) -- cycle ;
\draw  [color={rgb, 255:red, 0; green, 0; blue, 0 }  ,draw opacity=1 ][line width=2.25]  (400,68.5) -- (566,68.5) -- (566,76.5) -- (400,76.5) -- cycle ;
\draw  [color={rgb, 255:red, 21; green, 0; blue, 255 }  ,draw opacity=1 ][line width=2.25]  (48,38.6) .. controls (48,26.12) and (58.12,16) .. (70.6,16) -- (323.4,16) .. controls (335.88,16) and (346,26.12) .. (346,38.6) -- (346,106.4) .. controls (346,118.88) and (335.88,129) .. (323.4,129) -- (70.6,129) .. controls (58.12,129) and (48,118.88) .. (48,106.4) -- cycle ;
\draw [shift={(185,16)}, rotate = 359.67] [fill={rgb, 255:red, 0; green, 0; blue, 255 }  ,fill opacity=1 ][line width=0.08]  [draw opacity=0] (23.22,-11.15) -- (0,0) -- (23.22,11.15) -- cycle    ;
\draw [shift={(210,128.8)}, rotate = 179.67] [fill={rgb, 255:red, 0; green, 0; blue, 255 }  ,fill opacity=1 ][line width=0.08]  [draw opacity=0] (23.22,-11.15) -- (0,0) -- (23.22,11.15) -- cycle    ;
\draw  [fill={rgb, 255:red, 0; green, 0; blue, 0 }  ,fill opacity=1 ] (610,72) .. controls (610,69.79) and (611.79,68) .. (614,68) .. controls (616.21,68) and (618,69.79) .. (618,72) .. controls (618,74.21) and (616.21,76) .. (614,76) .. controls (611.79,76) and (610,74.21) .. (610,72) -- cycle ;
\draw  [fill={rgb, 255:red, 0; green, 0; blue, 0 }  ,fill opacity=1 ] (624,72) .. controls (624,69.79) and (625.79,68) .. (628,68) .. controls (630.21,68) and (632,69.79) .. (632,72) .. controls (632,74.21) and (630.21,76) .. (628,76) .. controls (625.79,76) and (624,74.21) .. (624,72) -- cycle ;
\draw  [fill={rgb, 255:red, 0; green, 0; blue, 0 }  ,fill opacity=1 ] (637,72) .. controls (637,69.79) and (638.79,68) .. (641,68) .. controls (643.21,68) and (645,69.79) .. (645,72) .. controls (645,74.21) and (643.21,76) .. (641,76) .. controls (638.79,76) and (637,74.21) .. (637,72) -- cycle ;
\draw  [fill={rgb, 255:red, 0; green, 0; blue, 0 }  ,fill opacity=1 ][line width=2.25]  (102,67.5) -- (178,67.5) -- (178,73.5) -- (102,73.5) -- cycle ;

\draw (215,77.4) node [anchor=north west][inner sep=0.75pt]    {$\varepsilon _{N_{\mathrm{el}}}\left( \Omega ^{\star }\right)$};
\draw (443,77.4) node [anchor=north west][inner sep=0.75pt]    {$\varepsilon _{N_{\mathrm{el}} +1}\left( \Omega ^{\star }\right)$};
\draw (311,19.4) node [anchor=north west][inner sep=0.75pt]  [font=\Large,color={rgb, 255:red, 21; green, 0; blue, 248 }  ,opacity=1 ]  {$\mathscr{C}$};

\end{tikzpicture}

    \caption{Contour encircling the occupied spectrum}
    \label{fig:contour}
  \end{figure}
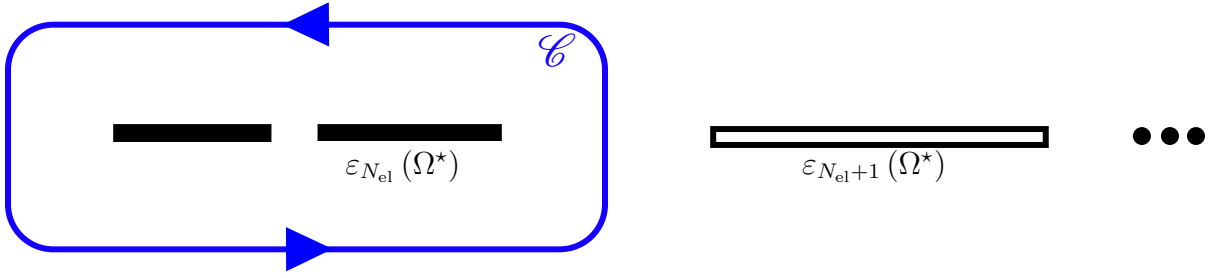

Following the computations of \cite[Lemma 5.1]{Levitt2020}, we have, with
the notation $\fint_{\Omega^{\star}} = \tfrac 1 {|\Omega^{\star}|}\int_{\Omega^{\star}}$,
\begin{align} \label{eq:chi0KKq-1}
  [\chi_0]_{GG'}(q) 
  &= \oint_{\mathscr C} 
  \fint_{\Omega^\star}
  {\Tr}_{L^{2}_{\rm per}} \left[ 
    \overline{e_G}
    \big(z-H^0_{k + q}\big)^{-1}
    e_{G'}  \big(z - H^0_k\big)^{-1}
  \right] \D{k}
  \frac{\D{z}}{2\pi \i} 
  \nonumber\\
  &= \oint_{\mathscr C} 
    \fint_{\Omega^\star}
    \sum_{n,m=1}^\infty
    \frac
        {\braket{e_G u_{mk}, u_{n,k+q}}
      \braket{u_{n,k+q}, e_{G'} u_{mk}} }
        {(z-\ep_{n,k + q}) (z-\ep_{mk}) } 
    \D{k}
    \frac{\D{z}}{2\pi \i} \\
  &= \sum_{n,m=1}^{+\infty}
  \fint_{\Omega^\star} 
  \frac
  {f_{0}(\ep_{n,k+q}-\mu^0) - f_{0}(\ep_{mk}-\mu^0)}
  {\ep_{n,k + q} - \ep_{mk}}
  \Braket{
    e_{G} u_{mk}, u_{n,k + q}
  }
  \Braket{
    u_{n,k + q}, e_{G'} u_{mk}
  }
  \D{k}.\nonumber
\end{align} 
This last expression is known as the
Adler--Wiser sum-over-states formula \cite{Adler1962,Wiser1963}. One
may see that the summation in the second line converges absolutely by
applying Cauchy--Schwartz inequality combined with
\begin{align}
    \sum_{n,m=1}^\infty
    \frac
        {\left|\braket{e_G u_{mk}, u_{n,k+q}}\right|^2}
        {|z-\ep_{n,k+q}|^2} 
    = \sum_{n=1}^\infty
    \frac
        {\| e_{-G} u_{n,k+q} \|^2_{L^2_{\mathrm{per}}}}
        {|z-\ep_{n,k+q}|^2} 
    \leq \frac{1}{|\Omega|} \sum_{n=1}^\infty
    \frac
        {1}
        {|z-\ep_{n,k+q}|^2}
\end{align}
which is uniformly bounded for $z\in \mathscr C$ and $k, k+q\in \Omega^\star$. Here,
we have used the fact that $(u_{mk})_m$ is a orthonormal basis of
$L^2_{\mathrm{per}}$ and that there exists $a,b > 0$ such that
$\ep_{nk} \geq a n^{\frac{2}{3}} - b$ for all $n$ (this readily follows from the Weyl asymptotic formula for periodic Schr\"odinger operators on $\R^3$). Therefore, one is able to integrate over the contour term-by-term to obtain the final line by Cauchy's integral formula.

\subsection{Step 2. The screened Coulomb operator} Using a Schur complement, we
can study the behavior of $[W]_{GG'}(q)$ for $q$ near $0$:

\begin{lemma}[Asymptotic expansion of $W_q$]
  The function $q\mapsto W_q$ is analytic from $\R^{3} \setminus {\mathbb L}^\star$ to
  $\mathcal L(L^{2}_{\rm per}, L^{2}_{\rm per})$. Moreover, there is a neighborhood $U$ of $0$ such that, for all $q \in U$ and $G,G'\in \mathbb L^\star$, we have
\label{lem:W}
\begin{align}
[W]_{GG'}(q) &= 
\frac
{4\pi \, w_{G}(q) \overline{w_{G'}(q)}}
    {q^{\t} \epsilon_{\rm M} q + R(q)}
\quad \text{with} \quad 
w_G(q) \coloneqq 
\begin{cases}
    1 &\text{if } G=0, \\
    b_G \cdot q + r_{G}(q) &\text{if } G\not=0,
\end{cases}
\label{eq:WGG'}
\end{align}
$b_G \in \mathbb C^3$, and $R,r_G: U \to \mathbb C$ are analytic with $\partial^\alpha R(0) = 0$ for all $|\alpha|\leq 2$ and $\partial^\beta r_G(0) = 0$ for all $|\beta|\leq 1$ and all $G \in \mathbb L^\star$.
\end{lemma}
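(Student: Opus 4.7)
The plan is to rewrite $W_q$ via the identity $W_q = M_q^{-1}$, where $M_q := v_{c,q}^{-1} - \chi_{0,q}$: indeed, $v_c^{-1}\epsilon = v_c^{-1} - \chi_0 = M$, whence $\epsilon = v_c M$ and $W = \epsilon^{-1}v_c = M^{-1}$. The advantage of this reformulation is that $M_q$ is self-adjoint and its matrix elements $[v_c^{-1}]_{GG'}(q) = \delta_{GG'}|q+G|^2/(4\pi)$ are entire in $q$, so we have inverted out the $|q|^{-2}$ singularity of $v_c$ at the cost of a near-zero eigenvalue of $M$ at $q = 0$.

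For analyticity on $\R^3\setminus \L^\star$, I would first prove that $q \mapsto \chi_{0,q}$ is analytic as a bounded operator on $L^2_{\rm per}$ by differentiating the contour integral~\eqref{eq:chi0KKq-1} in $q$, using that $(z - H^0_{k+q})^{-1}$ depends analytically on $q$ uniformly for $z \in \mathscr C$ and $k \in \Omega^\star$ under Assumption~\ref{assumption_gap}. Invertibility of $M_q$ for $q \notin \L^\star$ then follows from positivity: $-\chi_{0,q} \geq 0$ (from the sum-over-states formula, as $(f_0(\varepsilon_n)-f_0(\varepsilon_m))/(\varepsilon_n-\varepsilon_m) \leq 0$ by monotonicity of $f_0$), while $v_{c,q}^{-1}$ is strictly positive when $q \notin \L^\star$ (since $|q+G|>0$ for all $G$), so that $M_q>0$ and $W_q = M_q^{-1}$ is analytic there.

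To analyze $W_q$ near $q = 0$, I would perform a Schur-complement block decomposition with respect to $L^2_{\rm per} = \C e_0 \oplus P_\perp L^2_{\rm per}$, writing $M_q = \begin{pmatrix} A(q) & B(q) \\ B(q)^\ast & D(q)\end{pmatrix}$ with $A(q) = [M]_{00}(q)$ scalar, $B(q)$ the row $([M]_{0G}(q))_{G\neq 0}$, and $D(q) = P_\perp M_q P_\perp$. Three asymptotic facts are needed: (i) $A(q) = |q|^2/(4\pi) - [\chi_0]_{00}(q) = O(|q|^2)$, since $[\chi_0]_{00}(0) = 0$ from the sum-over-states formula at $q = 0$ — the inner products $\langle e_0 u_{mk}, u_{nk}\rangle = \delta_{mn}/\sqrt{|\Omega|}$ are supported on $m = n$, whereas the Fermi-Dirac difference quotient selects off-diagonal occupied-unoccupied pairs; (ii) $B(q) = O(|q|)$ by the same orthogonality argument for $B(0) = 0$ together with $k\cdot p$ perturbation theory for $\partial_q u_{n,k+q}|_{q = 0}$; and (iii) $D(0) \geq |G_{\min}|^2/(4\pi) \cdot \mathrm{Id}_{P_\perp L^2_{\rm per}}$ is invertible, so $D(q)^{-1}$ is analytic near $0$. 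The Schur-complement inversion formula then gives, with scalar $S(q) = A(q) - B(q)D(q)^{-1}B(q)^\ast = O(|q|^2)$,
\begin{align*}
[W]_{00}(q) &= \tfrac{1}{S(q)}, \qquad [W]_{0G}(q) = -\tfrac{[BD^{-1}]_{0G}(q)}{S(q)}, \\
[W]_{GG'}(q) &= \tfrac{[D^{-1}B^\ast]_{G0}(q)[BD^{-1}]_{0G'}(q)}{S(q)} + [D^{-1}]_{GG'}(q) \qquad (G,G' \neq 0).
\end{align*}
Identifying $\epsilon_M$ as the Hessian of $4\pi S$ at $q = 0$, setting $R(q) := 4\pi S(q) - q^\t \epsilon_M q = O(|q|^4)$ and $w_G(q) := -[D^{-1}B^\ast]_{G,0}(q)$ for $G \neq 0$ (with $w_0 := 1$) then yields the claimed rank-one singular part; the analytic remainder $[D^{-1}]_{GG'}(q)$ for $G, G' \neq 0$ is regular at $0$, and so contributes only rapidly-decaying corrections absorbed into the $\mathcal O(|R-R'|^{-4})$ error term of Theorem~\ref{thm:insulators}.

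The main obstacle is establishing positive-definiteness of $\epsilon_M$, since it receives contributions from both the second derivative of $[\chi_0]_{00}$ (the bare electronic response, whose sign is not manifest from the formula alone) and the local-field correction $B D^{-1} B^\ast$ (which enhances the dielectric constant above unity). Computing these explicitly requires a second-order Taylor expansion of $\chi_{0,q}$ at $q = 0$ via Kato-style analytic perturbation theory for $H^0_{k+q}$, reducing to matrix elements of the form $\langle u_{mk}, \partial_{k_\alpha}u_{nk}\rangle$ for occupied $m$ and unoccupied $n$, exactly as in the derivation of the macroscopic dielectric tensor in~\cite{Cances2010}.
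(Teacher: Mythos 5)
Your overall strategy --- writing $W_q=(v_{{\rm c},q}^{-1}-\chi_{0,q})^{-1}$, blocking with respect to $L^2_{\rm per}=\C e_0\oplus e_0^\perp$, and inverting by a Schur complement --- is exactly the paper's route, and your explicit retention of the analytic remainder $[D^{-1}]_{GG'}(q)$ for $G,G'\neq 0$ is if anything more careful than the lemma's literal statement (the paper absorbs this piece into an $O_a(1)$ term in the subsequent computation of $D_{ss'}(q)$). However, there is a genuine gap at your step \textit{(i)}. You claim $A(q)=|q|^2/(4\pi)-[\chi_0]_{00}(q)=O(|q|^2)$ ``since $[\chi_0]_{00}(0)=0$'', but that hypothesis only yields $A(q)=O(|q|)$. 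For the Schur complement $S(q)$ to take the form $\tfrac{1}{4\pi}\bigl(q^{\t}\epsilon_{\rm M}q+R(q)\bigr)$ with $R$ vanishing to second order --- which is the entire content of \eqref{eq:WGG'} --- you also need $\nabla[\chi_0]_{00}(0)=0$; otherwise $S$ retains a linear term and the claimed denominator is wrong. This does not follow from orthogonality of the Bloch eigenfunctions. The paper proves it by differentiating the contour integral in $q$: the derivative is $\oint_{\mathscr C}\fint_{\Omega^\star}{\Tr}\bigl[\overline{e_0}(z-H^0_k)^{-1}\nabla_kH^0_k(z-H^0_k)^{-1}e_0(z-H^0_k)^{-1}\bigr]$, and since $e_0$ is constant the integrand has only third-order poles in $z$, hence no residues, hence vanishes. (Alternatively one can invoke time-reversal symmetry, which makes $[\chi_0]_{00}$ an even function of $q$; but some such argument must be supplied.)

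Two smaller points. First, the analyticity of $q\mapsto D(q)^{-1}$ needs more care than ``$D(0)$ is bounded below'': the diagonal entries $|q+G|^2/(4\pi)$ are unbounded in $G$, so $q\mapsto D(q)$ is \emph{not} analytic (nor even continuous) into $\mathcal L(e_0^\perp)$; one must either work between the weighted spaces $\sqrt{v_{{\rm c},q}}\,P_{e_0^\perp}L^2_{\rm per}$ and $\tfrac{1}{\sqrt{v_{{\rm c},q}}}P_{e_0^\perp}L^2_{\rm per}$ as the paper does (citing \cite[Lemma~6]{Cances2010}), or argue via analytic families in the sense of forms. Second, the positive-definiteness of $\epsilon_{\rm M}$, which you flag as the main remaining obstacle, actually comes for free from your own positivity observation: $-\chi_{0,q}\ge 0$ gives $M_q\ge v_{{\rm c},q}^{-1}$, hence $[W]_{00}(q)=\langle e_0,M_q^{-1}e_0\rangle\le\langle e_0,v_{{\rm c},q}e_0\rangle=4\pi/|q|^2$, hence $4\pi S(q)\ge|q|^2$ and $\epsilon_{\rm M}\ge 1$; no explicit second-order perturbation theory is needed for that.
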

\begin{proof}
  Here and in the rest of this proof, we denote by $\mathcal O_{a}(|q|^{n})$ a
  quantity that is analytic with respect to $q$ in the appropriate
  topologies ($\R$, $L^{2}_{\rm per}$ or
  $\mathcal L(L^{2}_{\rm per}, L^{2}_{\rm per})$), and has vanishing
  $n-1$ first derivatives at $q=0$.

One can show that $(z-H_k^0)^{-1} (1 - \Delta_k)$ and its inverse are bounded
uniformly for $z \in \mathscr C$ and $k \in \Omega^\star$
\cite[Lemma~3]{Cances2008}, and thus
$\| (z - H_k^0)^{-1} (H_{k+q}^0 - H_k^0) \| \lesssim |q|$ uniformly in $z\in\mathscr C$ and
$k,q\in \Omega^\star$ with $|q|<1$. As a result, $q \mapsto (z - H_{k+q}^0)^{-1}$ is given for small $q$
by the Neumann series
$\sum_{n=0}^\infty \big( (z - H_k^0)^{-1} (H_{k+q}^0 - H_k^0) \big)^n (z - H_k^0)^{-1} $
and therefore $q\mapsto \chi_{0,q}$ is analytic from
$\Omega^\star$ to $\mathcal L(L^2_{\rm per}, L^2_{\rm per})$.

Using the orthogonality of $(u_{nk})_{n}$ in \cref{eq:chi0KKq-1}, we
have $[\chi_0]_{0G}(0) = [\chi_0]_{G0}(0) = 0$ for all $G \in {\mathbb L}^\star$.
Moreover,
\begin{align*}
  \nabla[\chi_0]_{00}(q=0) &= \oint_{\mathscr C} 
  \fint_{\Omega^\star}
  {\Tr}_{L^{2}_{\rm per}} \left[ \overline{e_{0}}
    \big(z-H^0_{k}\big)^{-1}
    \nabla_k H^{0}_{k}
    \big(z-H^0_{k}\big)^{-1}
    {e_{0}}
    \big(z - H^0_k\big)^{-1}
  \right] \D{k}
  \frac{\D{z}}{2\pi \i}\\
  &= \frac 1 {|\Omega|}\oint_{\mathscr C} 
  \fint_{\Omega^\star}
  {\Tr}_{L^{2}_{\rm per}} \left[ 
    \nabla_k H^{0}_{k}
    \big(z - H^0_k\big)^{-3}
  \right] \D{k}
  \frac{\D{z}}{2\pi \i}
  = 0
\end{align*}
because the integrand has only poles of order $3$, which have no residues.
It then follows that, in the orthogonal decomposition 
    $L^2_{\rm per} = \mathbb C e_0 \oplus e_0^\perp$, 
 we have
\begin{align*}
  \chi_{0,q} &=
  \begin{pmatrix}
    - \frac{1}{4\pi} q^{\t} L q + \mathcal O_a(|q|^{3}) & B^{\star} \cdot q + \mathcal O_a(|q|^{2})\\
    B \cdot q + \mathcal O_a(|q|^{2}) & \chi_{0,q}^{\neq 0}
  \end{pmatrix}
\end{align*}
where $q\mapsto \chi_{0,q}^{\neq 0}$ is analytic from $\R^{3}$ to $\mathcal L(e_{0}^{\perp}, e_{0}^{\perp})$ and $L \in \mathbb R^{3\times3}_{\mathrm{sym}}, B_G \in \mathbb C^3$ are given by
\begin{align}
    &L_{\alpha\beta} = 2i \oint_{\mathscr C}
  \fint_{\Omega^\star} \!\!
  {\Tr}_{L^{2}_{\rm per}} \!\! \left[ 
    \overline{e_0}
    \big(z-H^0_{k}\big)^{-1}
    \partial_{k_\alpha} H_{k}^0
    \big(z-H^0_{k}\big)^{-1}
    \partial_{k_\beta} H_k^0 
    \big(z-H^0_{k}\big)^{-1}
    e_0 \big(z - H^0_k\big)^{-1}
  \right] \! \D{k}\D{z}, \nonumber\\
   &[B_G]_\alpha = \oint_{\mathscr C}
  \fint_{\Omega^\star}
  {\Tr}_{L^{2}_{\rm per}} \left[ 
    \overline{e_{G}}
    \big(z-H^0_{k}\big)^{-1}
    \partial_{k_\alpha} H_{k}^0
    \big(z-H^0_{k}\big)^{-1}
    e_0 \big(z - H^0_k\big)^{-1}
  \right] \D{k}\frac{\D{z}}{2\pi i}.
\end{align}

Using \cite[Lemma~6]{Cances2010}, $-\chi_{0,q}^{\neq 0} + (v_{{\rm c},q}^{\neq 0})^{-1}$ is bounded and invertible as an
operator between $\sqrt{v_{{\rm c},q}} P_{e_0^\perp} L^2_{\rm per}$ and $\frac{1}{\sqrt{v_{{\rm c},q}}} P_{e_0^\perp} L^2_{\rm per}$.
Therefore, it follows from a Schur complement argument that $W_{q} \colon \frac{1}{\sqrt{v_{{\rm c},q}}}L^2_{\rm per} \to \sqrt{v_{{\rm c},q}} L^2_{\rm per}$ may be decomposed as
\begin{align*}
  W_{q} &= (-\chi_{0,q} + v_{{\rm c},q}^{-1})^{-1}\\
  &=\begin{pmatrix}
    - \frac{1}{4\pi} q^{\t} (1+L) q + \mathcal O_a(|q|^{3}) & -B^{\star} \cdot q + \mathcal O_a(|q|^{2})\\
    -B \cdot q + \mathcal O_a(|q|^{2}) & -\chi_{0,q}^{\neq 0} + (v_{{\rm c},q}^{\neq 0})^{-1}
  \end{pmatrix}^{-1}\\
&=
  \frac
    {4\pi}
    {q^\t \epsilon_{\rm M} q + \mathcal O_a(|q|^3)}
  \begin{pmatrix}
    1 & b^\star \cdot q + \mathcal O_a(|q|^{2}) \\
    b \cdot q + \mathcal O_a(|q|^{2}) & (b \cdot q) \otimes (b^\star \cdot q) + \mathcal O_a(|q|^3)
  \end{pmatrix}
\end{align*}
with  
\begin{align}
  [ \epsilon_{\rm M} ]_{\alpha\beta} &= \delta_{\alpha\beta} + L_{\alpha\beta} - \sum_{G,G'\not=0} [B^\star_G]_\alpha \,\, \big[(-\chi_{0,q}^{\neq 0} + (v_{{\rm c},q}^{\neq 0})^{-1})^{-1}\big]_{GG'}(0) \,\,  [B_{G'}]_{\beta} 
    \qquad \text{and} \nonumber\\ 
    [b_G]_\alpha &= - \sum_{G'\not= 0} \big[(-\chi_{0,q}^{\neq 0} + (v_{{\rm c},q}^{\neq 0})^{-1})^{-1}\big]_{GG'}(0) \,\, [B_{G'}]_\alpha,
\end{align}
as required.\end{proof}

We now convert Lemma~\ref{lem:W} on the smoothness of $W_{q}$ to our
main results on locality.

\subsection{Step 3. Reduction to the singular component}
\label{sec:pf-insul}

Here as in the previous
section, we denote by $\mathcal O_{a}(|q|^{n})$ a quantity that is analytic
with respect to $q$ in a neighborhood of $0$, and has vanishing $(n-1)$ first derivatives at $q=0$.

Using Lemma \ref{lem:W}, we obtain that $D_{ss'}(q)$ is
analytic on $\R^{3} \setminus {\mathbb L}^\star$ and that, in a neighborhood of $q=0$,
\begin{align}
    &\big[D_{ss'}(q)\big]_{\alpha\beta}
    = \frac
        {\delta_{ss'}}
        {\sqrt{|\Omega^\star|}} 
    [c_s]_{\alpha\beta} 
    \! + \!\!
    \sqrt{|\Omega^\star|}
    \!\!\!\!
    \sum_{G,G' \in \mathbb L^\star} 
    \!\!\!\!\!\overline{Z_s\widehat{\partial_\alpha m(\cdot\!-\!\!\tau_s)}(q \! + \! G)} 
    \! \left[ W \right]_{GG'}\!(q) 
    Z_{s'}\widehat{\partial_\beta m(\cdot\!-\!\!\tau_{s'})}(q \! + \! G') \nonumber\\
    &= \frac
        {4\pi \sqrt{|\Omega^\star|} e^{i q \cdot ( \tau_{s} - \tau_{s'} )}}
        {q^\t \epsilon_{\rm M} q + \mathcal O_a(|q|^3)}
    \sum_{G\in \mathbb L^\star}\overline{ 
            \overline{w_G(q)}
            Z_s\widehat{\partial_\alpha m}(q + G)
            e^{-i G\cdot \tau_s}
        }
        \sum_{G'\in \mathbb L^\star} 
        \overline{w_{G'}(q)}Z_{s'}\widehat{\partial_\beta m}(q + G')
        e^{-i G'\cdot \tau_{s'}}
    \nonumber\\ &\hspace{10.5cm}+ O_{a}(1) \nonumber\\
    &= \frac{4\pi \sqrt{|\Omega^\star|}}{(2\pi)^3} e^{i q \cdot ( \tau_{s} - \tau_{s'} )}
    \underbrace{\frac
    {(\sum_{\gamma} q_\gamma \overline{Z_{s,\gamma\alpha}^\star})
      (\sum_{\delta} q_\delta Z_{s',\delta\beta}^\star) + \mathcal O_a(|q|^{3})}
    {q^\t \epsilon_{\rm M} q + \mathcal O_a(|q|^{3})}}_{f(q)}
      + O_{a}(1) \nonumber
\end{align}
where
\begin{align}
  Z_{s,\alpha\beta}^\star
    &= Z_s
    \left[ 
        \delta_{\alpha\beta}
        + 
        (2\pi)^{\frac{3}{2}} \sum_{G\in\mathbb L^\star\setminus\{0\}} 
        \, 
        \overline{[b_G]_\alpha} 
        \, 
        G_{\beta}
        \,
        \widehat{m}(G) e^{-iG\cdot \tau_s}
    \right].
    \label{eq:Z*}
  \end{align}
  By time-reversal symmetry, we have $(u_{n,-k}, \ep_{n,-k}) = (\overline{u_{nk}}, \ep_{nk})$, so that $[\chi_0]_{GG'}(q) = \overline{ [\chi_0]_{-G,-G'}(-q) }$ and thus $b_{G} = - \overline{b_{-G}}$ and $Z_{s}^\star \in \mathbb R^{3\times3}$.

  Introduce now a reciprocal-space smooth non-negative radial cut-off function $\eta$ which is equal to $1$ in
  a small neighborhood $U$ of $0$, and to $0$ outside $2U$. By
  choosing $\Omega^{\star}$ to contain $2U$, we get that
  $D(q) (1-\eta(q))$ is analytic on $\Omega^{\star}$ and extends to a
  periodic function. Therefore, the corresponding component of
  $C_{Rs,R's'}$ decays faster than any inverse polynomial, which we
  denote by $O(|R-R'|^{-\infty})$. It follows that
  \begin{align}
    C_{Rs,R's'}
    &= \frac 1 {\sqrt{|\Omega^{\star}|}}\int_{\Omega^{\star}} D_{ss'}(q) e^{iq\cdot (R-R')} \D{q} \nonumber\\
    &= \frac{4\pi}{(2\pi)^3} \int_{\R^{3}} e^{i q \cdot ( \tau_{s} - \tau_{s'} )} \eta(q) f(q) e^{iq\cdot (R-R')} \D{q} + O(|R-R'|^{-\infty}). \nonumber
  \label{eq:lead}
  \end{align}
Next, we study the first term in the right-hand side of the above relation.
  
  \subsection{Step 4: Explicit computation of the singular
    component.}\label{sec:step4} 
  By Taylor expanding the numerator of $f$ and by expanding its
  denominator in an incomplete geometric series, we get
  \begin{align*}
    f(q) &=\frac
    {(\sum_{\gamma} q_\gamma \overline{Z_{s,\gamma\alpha}^\star})
      (\sum_{\delta} q_\delta Z_{s',\delta\beta}^\star) + \mathcal O_{a}(|q|^{3})}
    {q^\t \epsilon_{\rm M} q + \mathcal O_{a}(|q|^{3})}
    \\
    &=\underbrace{\frac {(\sum_{\gamma} q_\gamma \overline{Z_{s,\gamma\alpha}^\star})
      (\sum_{\delta} q_\delta Z_{s',\delta\beta}^\star)}
    {q^\t \epsilon_{\rm M} q}}_{f_{\rm lead}(q)} + \underbrace{\sum_{n=1}^{N-1} \frac{p_{n}(q)}{(q^{\t}\epsilon_{\rm M}q)^{N}}}_{f_{\rm hom}(q)} + \underbrace{\frac
    {\mathcal O_a(|q|^{3N+2})}
    {(q^\t \epsilon_{\rm M}q)^{N+1}+\mathcal O_a(|q|^{2N+3})}}_{f_{\rm rem}(q)},
\end{align*}
where $p_n$ are homogenous polynomials of degree $n+2N$.

By fixing in the rest of this proof $N = 5$, $f_{\rm rem}$ is $C^{4}$, so that the
corresponding contribution in $C_{Rs,R's'}$ is $O(|R-R'|^{-4})$.
We then treat the first two terms in turn:

\textit{(i) Leading order contribution.} Using the general fact that
    $\widehat{f\circ A} = \left|\det A\right|^{-1} \widehat{f} \circ A^{-\t}$ for
    invertible $A\in \mathbb R^{3\times3}$, we obtain
    \begin{align}
      \left[ 
        \big|\sqrt{\epsilon_{\rm M}} q\big|^{-2}
      \right]^\vee(x)
      = \frac
        { \widecheck{|\cdot|^{-2}}( \epsilon_{\rm M}^{-\frac{1}{2}} x) }
        {\sqrt{\det \epsilon_{\rm M}}} 
      &= \frac{(2\pi)^{\frac32}}{4\pi} 
      \frac
        {|x^{\t} \epsilon_{\rm M}^{-1} x|^{-\frac{1}{2}}}
        {\sqrt{\det \epsilon_{\rm M}}} 
       = 
       \frac{(2\pi)^{\frac32}}{4\pi} 
        \Phi_{\rm M}(x). \nonumber
    \end{align}
    Therefore, for the corresponding contribution in $C_{Rs,R's'}$ we obtain
    \begin{align*}
      \frac{4\pi}{(2\pi)^3} &\int_{\R^{3}} e^{i q \cdot ( \tau_{s} - \tau_{s'} )} \eta(q) f_{\rm lead}(q) e^{iq\cdot (R-R')} \D{q} \nonumber\\
      &=       \frac{4\pi}{(2\pi)^3} \int_{\R^{3}} e^{i q \cdot ( \tau_{s} - \tau_{s'} )} f_{\rm lead}(q) e^{iq\cdot (R-R')} \D{q} + O(|R-R'|^{-\infty})\\
      &=-\big[ Z_{s}^{\star\t} \nabla^2 \Phi_{\rm M}(R+\tau_{s}-R'-\tau_{s'}) Z_{s'}^\star \big]_{\alpha\beta} + O(|R-R'|^{-\infty}),
    \end{align*}
    the leading order in Theorem \ref{thm:insulators}.

\textit{(ii) Homogeneous contribution.} Without loss of generality, we may use the change of variables as in \textit{(i)} and assume $M=1$. Then, we have 
\[
    \int_{\mathbb R^3} \eta(q) 
    f_{\rm hom}(q) e^{i q\cdot R} \D{q}
    = \sum_{n=1}^N \int_{\mathbb R^3} \eta(q) g_n(q) e^{i q\cdot R} \D{q},
    \qquad g_n(q) \coloneqq 
    \frac
        {p_n(q)}
        {|q|^{2N}}
\]
(recall that $p_n$ a homogeneous polynomial of degree $n + 2N$, so that
$g_{n}$ is homogeneous of degree $n$).

Notice $g_n$ is locally integrable and thus $\widecheck{g_n}$ is a tempered distribution. We wish to write $\widecheck{g_n}$ as a derivative of $[|q|^{-2N}]^\vee$, but since $|q|^{-2N}$ is not locally integrable (for all $N\geq 2)$, some care is needed.

\begin{lemma}\label{lem:g}
    For $n \ge 1$, $\widecheck{g_n}(x) = \alpha \left(p_n(-i\nabla)|\cdot|^{2N-3}\right)(x)$ for some
    $\alpha \in \R$.
\end{lemma}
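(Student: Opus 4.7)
The plan is to identify $\widecheck{g_n}$ as a tempered distribution and show it is proportional to $p_n(-i\nabla)|\cdot|^{2N-3}$, by combining the theory of homogeneous tempered distributions with the classical Riesz-kernel identity.

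First, I would recall that, although $|q|^{-2N}$ is not locally integrable for $N\ge 2$, it is nonetheless a well-defined tempered distribution. Indeed, the family $s\mapsto |q|^{s}$, initially well-defined as a locally integrable function for $\re s > -3$, extends meromorphically in $s$ to all of $\C$ with simple poles exactly at the exceptional values $s \in \{-3,-5,-7,\dots\}$. Since $-2N \notin \{-3,-5,-7,\dots\}$ for any integer $N\ge 1$, the distribution $|q|^{-2N}$ is well-defined as a homogeneous tempered distribution of degree $-2N$.

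Next, I would justify the distributional identity $g_n = p_n(q)\cdot |q|^{-2N}$. The right-hand side is the product of a polynomial with a tempered distribution, hence itself a tempered distribution, and on $\R^{3}\setminus\{0\}$ it agrees with the locally integrable function $g_n$. The difference $g_n - p_n(q)\cdot |q|^{-2N}$ is therefore supported at $\{0\}$, hence a finite linear combination of derivatives of $\delta$. Each such derivative is homogeneous of strictly negative degree $-3-|\alpha|$, whereas both $g_n$ and $p_n(q)\cdot|q|^{-2N}$ are homogeneous of degree $n\ge 1$. The difference must therefore vanish.

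With this identification in hand, the standard intertwining $\widecheck{p(q)T} = p(-i\nabla)\widecheck{T}$, valid for any polynomial $p$ and any tempered distribution $T$, yields $\widecheck{g_n} = p_n(-i\nabla)\widecheck{|q|^{-2N}}$. I would then invoke the Hadamard-type Riesz formula $\widehat{|x|^{s}}(q) = 2^{s+3}\pi^{3/2}\Gamma(\tfrac{s+3}{2})/\Gamma(-\tfrac{s}{2})\cdot|q|^{-s-3}$ specialised at $s = 2N-3$: since neither $\Gamma(N)$ nor $\Gamma(\tfrac{3}{2}-N)$ has a pole for integer $N\ge 1$, this gives $\widehat{|\cdot|^{2N-3}} = C_N|q|^{-2N}$ with $C_N\ne 0$, equivalently $\widecheck{|q|^{-2N}} = C_N^{-1}|\cdot|^{2N-3}$. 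Setting $\alpha = C_N^{-1}$ yields the claim.

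The main obstacle is really the second step: the factorisation $g_n = p_n\cdot|q|^{-2N}$ is not a routine distributional manipulation, because $|q|^{-2N}$ is itself not locally integrable when $N\ge 2$. It is the homogeneity-at-the-origin argument that rules out any $\delta$-type correction, and this step relies crucially on the hypothesis $n\ge 1$; without it, the proportionality would only hold modulo derivatives of $\delta$ of order $n$, and the statement of the lemma would require modification.
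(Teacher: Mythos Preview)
Your proof is correct and follows essentially the same approach as the paper: regularize $|q|^{-2N}$ as a tempered distribution, identify $g_n$ with $p_n$ times this regularization, and invoke the Riesz-kernel Fourier identity. The only minor differences are that the paper uses the explicit principal-value formula (with the meromorphic-continuation viewpoint relegated to a remark) and verifies $g_n = p_n\,\pv|\cdot|^{-2N}$ directly from the fact that $p_n$ is homogeneous of degree $n+2N$, so that $p_n\phi$ vanishes to sufficient order at $0$ for any test function $\phi$; your homogeneity argument ruling out $\delta$-contributions achieves the same end by a slightly different route.
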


The conclusion of \textit{(ii)} follows from this lemma. 
Let $h \in C_c^\infty(\R^3)$ be a real-space smooth non-negative radial cut-off function such that $h=1$ on $B_{|R|/2}$ and
$\supp h \subset B_{3|R|/4}$. Since $p_n$ has degree $n+2N$, we have
$\left|(p_n(-i\nabla) |\cdot|^{2N-3})(x)\right| \lesssim |x|^{-(n+3)}$ at infinity and thus
\begin{align}
    \Big| \int_{\R^3} &g_{n}(q) \eta(q) e^{i q \cdot R} \D q \Big| 
    = \left| c_N\Big\langle
    p_n(-i\nabla) |\cdot|^{2N-3}, \widecheck{\eta}(\cdot - R)
    \Big\rangle_{{\mathcal S}' \times {\mathcal S}} \right|\nonumber \\
    &\leq c_N
    \int_{\mathbb R^3}
    \left| (p_n(-i\nabla) |\cdot|^{2N-3})(x)\right|  (1-h(x))  \left|\widecheck{\eta}(x  - R)\right| \D x
    + \left|\big\langle \widecheck{g_n}, h \widecheck{\eta}(\,\cdot - R) \big\rangle_{{\mathcal S}' \times {\mathcal S}}\right| \nonumber\\
    &\lesssim \int\limits_{B_{|R|/2}} \!\!\!
    \frac
        {\left|\widecheck{\eta}(x  - R) \right|}
        {|x|^{n+3}}
    \D{x}
    + |R|^{-n}
    \sup_{ |y| > |R|/2 } \left|\widecheck{\eta}(y) \right|
    + \mathcal O(|R|^{-\ell}) 
    = \mathcal O(|R|^{-(n+3)}) \nonumber
\end{align} 
for all $\ell \in \mathbb N$, as $|R|\to \infty$, which is lower order since $n \ge 1$.

\begin{proof}[Proof of Lemma~\ref{lem:g}]
  The Cauchy principal value $\pv |\cdot|^{-2N}$ is the tempered distribution in $\mathcal S'(\mathbb R^3)$ such that for any $j \in \N$ with $j > 2N-4$,
  by
\begin{align}
    \Big\langle \pv \frac{1}{|\cdot|^{2N}}, \phi \Big\rangle_{{\mathcal S}' \times {\mathcal S}}
    &\coloneqq 
    \int_{B_1}
    \frac
        { \phi(q) - P_j[\phi](q)}
        {|q|^{2N}} 
    \D{q}
    + 
    \int_{B_1^c}
    \frac{\phi(q)}{|q|^{2N}}  \D{q} 
    %
    %
    + \sum_{
        \above
            {|\alpha|\leq j :}
            {\alpha_i \text{ even}}
    }
    \frac
        {\int_{\mathbb S^{2}} \theta^\alpha \D{\theta}}
        {|\alpha| + 3 - 2N}
     \frac
        {\partial^\alpha \phi(0)}
        {\alpha!}. 
    \nonumber
\end{align}
where $P_j[\phi]$ is the degree $j$ Taylor polynomial of $\phi$ at $0$. Here, one integrates the Taylor polynomial over $B_1$ using polar coordinates and notes that the odd powers vanish by symmetry arguments \cite{Grafakos2016}. One may check that for all $j > 2N-4$, the above formula actually defines a tempered distribution independent of the choice of $j$. Furthermore, using \cite[Theorem~2.4.6]{Grafakos2016}, we have $\widecheck{\pv |\cdot|^{-2N}} = c_N |x|^{2N-3}$ for some constant $c_N$.

Since $\pv |\cdot|^{-2N} = |\cdot|^{-2N}$ on
$\big\{ \phi \in \mathcal S(\mathbb R^d) \colon \partial^\alpha \phi(0) = 0 \,\, \forall |\alpha| < 2N-2\big\}$,
$g_n = p_n \pv |\cdot|^{-2N}$, and the result follows. 
\end{proof}

\begin{remark}
    More generally, one may define $u_z \coloneqq \Gamma( \frac{d+z}{2} )^{-1} \pv |\cdot|^{z} \in \mathcal S'(\mathbb R^d)$ for all $z \in \mathbb C$ with $\widehat{u_z} = c_{d,z} u_{-d-z}$ for some constant $c_{d,z}$. The proof follows from showing that $z\mapsto \Braket{u_z,\phi}$ is entire for all $\phi \in \mathcal S(\mathbb R^d)$ and computing $\widehat{u_z}$ for $-d < \re z < -d + \frac12$ explicitly \cite{Grafakos2016}.  
\end{remark}

\section*{Acknowledgements}
This project has received funding from the European Research Council (ERC) under the European Union’s Horizon 2020 research and innovation programme (grant agreement EMC2 No 810367) and the Simons Targeted Grant Award No. 896630.

\bibliographystyle{siam}
\bibliography{refs.bib}

\end{document}